\newtheorem{theorem}{Theorem}[section]
\newtheorem{proposition}[theorem]{Proposition}
\theoremstyle{definition}
\newtheorem{example}[theorem]{Example}
\newtheorem{remark}[theorem]{Remark}
\newcommand{\PP}{{\mathcal{P}}}
\newcommand{\MM}{{\mathrm{Mat}}}
\newcommand{\HH}{{\mathrm{Hom}}}
\newcommand{\FF}{{\mathcal{F}}}
\newcommand{\cc}{{\mathrm{coker}}}
\newcommand{\OO}{{\mathrm{Op}}}
\newcommand{\RR}{{\mathcal{R}}}
\title{\bf On auxiliary fields and Lagrangians \\for  relativistic wave equations}
\author[1,2]{{Alexey \textsc{Sharapov}}\thanks{E-mail: sharapov@phys.tsu.ru}}
\author[2]{ David \textsc{Shcherbatov}}
\affil[1] {\small \it Centro de Matem\'atica, Computa\c{c}\~{a}o e
Cogni\c{c}\~{a}o, Universidade Federal do ABC, Santo Andr\'e, SP, 
Brazil} 
\affil[2]{\it Physics Faculty, Tomsk State University, Lenin ave. 36, Tomsk 634050, Russia}
\date{}
\begin{document}
\maketitle

\begin{abstract}
We address the problem of the existence of a Lagrangian for a given system of linear PDEs with constant coefficients. As a subtask, this involves bringing the system into a pre-Lagrangian form, wherein the number of equations matches the number of unknowns. We introduce a class of overdetermined systems, called co-flat, and show that they always admit a pre-Lagrangian form, which can be explicitly constructed by means of auxiliary variables. Moreover, we argue that such systems enjoy pre-Lagrangian formulations without auxiliary variables at all. As an application of our method, we construct new pre-Lagrangian and Lagrangian formulations for free massive fields of arbitrary integer spin. In contrast to the well-known models of Singh and Hagen, our Lagrangians involve much fewer auxiliary fields. 

\end{abstract}

\section{Introduction}

Most of the equations encountered in physics, especially in the theory of fundamental interactions, come from the least action principle. This is not surprising, since the Lagrangian formulation of classical dynamics is generally considered as the first step towards quantum theory. Therefore, the following problem often occurs: given a system of partial differential equations (PDEs), find a Lagrangian it comes from. This is known as the inverse problem of the calculus of variations. 

A typical example of the above situation is provided by the relativistic wave equations for higher spin particles. Although at the free level, such equations can be systematically derived and classified using the representation theory of the Poincar\'e group, the Lagrangian formulation for them  had long remained an open problem. The point is that, unlike the classical inverse problem studied in mathematics, the number of field equations usually exceeds the number of fields. In other words, the PDE systems of physical interest are usually overdetermined and finding their Lagrangian formulations is not reduced to the problem of `variational multiplier’. 
For instance, a free massive particle of spin $s$ can be described by a traceless symmetric tensor field $\phi_{\mu_1\cdots\mu_s}$ subject to the equations of motion
\begin{equation}\label{KG}
    (\square -m^2)\phi_{\mu_1\cdots\mu_s}=0\,,\qquad \partial^{\mu_1}\phi_{\mu_1\cdots \mu_s}=0\,.
\end{equation}
Here $\square=\partial^\mu\partial_\mu$ is the D'Alembert operator and $m\neq 0$ is the particle's mass. In $4$-dimensional Minkowski space, this yields $(s+1)^2+s^2$ 
equations for $(s+1)^2$ field components.

As was first noted by Fierz and Pauli \cite{Fierz:1939ix}, one can correct the imbalance between the fields and equations by means of the so-called {\it auxiliary fields}. The idea is to introduce a suitable number of additional fields so that they all vanish due to the equations of motion, whereupon the field equations become equivalent to Eqs. (\ref{KG}). In particular, a minimal Lagrangian formulation for the massive spin-2 field was shown to involve a single  auxiliary scalar. 
Later, developing the ideas of Fierz and Pauli, Singh and Hagen proposed Lagrangians for free massive fields of arbitrary spin in four dimensions \cite{Singh:1974qz, Singh:1974rc}. As auxiliary fields for system (\ref{KG}), they used symmetric and traceless tensors of ranks $0$, $1$, ..., $s-2$ (one of each). The massless limit of the Singh--Hagen Lagrangians for integer spins was considered by Fronsdal \cite{Fronsdal:1978rb}. In the massless limit, the Lagrangian equations become underdetermined (enjoy gauge invariance) and all but one of the auxiliary tensor fields decouple. Thus, both massive and massless particles of higher spins need auxiliary fields for their Lagrangian formulation.

In this paper, we address the inverse problem of the calculus of variations for overdetermined systems of PDEs with an eye to applications in relativistic field theory. Since the inverse problem for such systems is still in its infancy, we restrict ourselves to linear PDE systems with constant coefficients. 
Even in such a restricted formulation, the problem is far from being trivial and its complete solution is elusive. In general, one can divide the problem into two parts. At the first stage, it is necessary to extend the original PDE system with a suitable set of auxiliary fields. The extended system must satisfy two conditions: (i) be equivalent to the original system and (ii) be square in size, so that the total number of fields is equal to the number of field equations.
We call such a system {\it pre-Lagrangian}. Generally, the matrix differential operator determining the extended system, being square, may not be formally self-adjoint, in which case the  pre-Lagrangian equations are not Lagrangian {\it per se}.  Therefore, at the next stage, one needs to bring the pre-Lagrangian system into a Lagrangian one by means of a variational multiplier. Unlike much of the mathematical literature (see e.g. \cite{1984JDE, H, biesecker2009inverse,  S}), we look for multipliers that are square matrices with entries in differential operators, rather than functions. The variational multiplier should be chosen in such a way as to make the operator of the pre-Lagrangian system formally self-adjoint. Furthermore, the matrix of the multiplier must be unimodular to ensure equivalence. The last two conditions are not easy to satisfy, and we are unaware of any  practical criterion for deciding whether a given operator admits a variational multiplier. Some necessary conditions are discussed below in Sec. 2.3. This drawback motivates us to focus on the first step, i.e., the existence of a pre-Lagrangian form for a given system of linear PDEs with constant coefficients. In this formulation, the problem can be systematically studied by methods of homological algebra. All necessary mathematics is reviewed in the next section.

Using the language  of homological algebra, we introduce the notion of a {\it co-flat system}. Loosely speaking, the co-flat systems represent the simplest class of overdetermined PDE systems. For such systems, a systematic method is proposed for introducing auxiliary fields that make them pre-Lagrangian. Despite their simplicity, the co-flat systems cover a wide class of physically interesting theories. For example, all massive higher spin fields (\ref{KG}) fall into this class. In that case, the auxiliary fields for spin-$s$ particles, as offered by our method, constitute a symmetric traceless tensor of rank $s-1$. This is much less than in the Singh--Hagen models. The price to be paid for such simplification is that the pre-Lagrangian equations necessarily include higher derivatives. 
Direct calculations in particular cases show that these pre-Lagrangian equations are equivalent to Lagrangian ones so that the variational multiplier problem has a positive solution for them. This gives new Lagrangians for higher spin massive theories with fewer auxiliary fields. 

It might be tempting to assume that a single Lorentz irreducible tensor $\psi_{\mu_1\cdots\mu_{s-1}}$ provides the minimum number of auxiliary fields required for spin $s>2$. Surprisingly, this is not true. It follows from the Quillen--Suslin theorem that any co-flat system  admits a pre-Lagrangian formulation without auxiliary fields at all! 
Unfortunately, the theorem does not guarantee that the corresponding equations will be Lorentz invariant (although their solutions carry a unitary irreducible representation of the Poincar\'e group). An examination of simple examples shows that this is indeed the case.
Perhaps this points to some mechanism of spontaneous  breaking of Lorentz symmetry in nonlinear models with higher-spin fields. 
In any case, the very existence of relativistic wave equations with broken Lorentz symmetry looks intriguing and deserves further study.

\section{Algebraic preliminaries}
In this section, we fix our notation and briefly recall some algebraic notions and constructions that will be needed later. For a systematic exposition of the subject, we refer the reader to \cite{book:Lam2, Lam1, Palamod, Tarkh}. 

\subsection{Free resolutions}
Let  $\PP=\mathbb{R}[p_1,\ldots,p_d]$ be the ring of polynomials in $d\geq 1$  indeterminates over  reals.  By $\MM_{n,m}(\PP)$ we denote the set of all $n\times m$-matrices with entries in $\PP$. In this paper, we are mostly interested in the systems of homogeneous linear equations over the polynomial ring $\PP$. Each such system is determined by a matrix $A\in \MM_{n,m}(\PP)$ and has the form $AX=0$ for unknown $X\in \MM_{m,1}(\PP)$.  More abstractly, one can think of the matrix $A$ as defining a homomorphism $A: \PP^m\rightarrow \PP^n$  of free $\PP$-modules\footnote{We recall that a module is an algebraic structure defined by the same axioms as for a vector space but where the scalars belong to a ring (in our case, $\PP$) rather than a  field.} of ranks $m$ and $n$. In the following, we will frequently identify the $\PP$-module $\HH_\PP(\PP^m, \PP^n)$ with $\MM_{n,m}(\PP)$.  Associated with the homomorphism $A$ is the pair  of $\PP$-modules $\ker A\subset \PP^m$ and $\cc A=\PP^n/\mathrm{Im}A$. In general, neither module is free. The former is identified with the module of solutions to the  above system of  linear equations. The latter module admits a finite free resolution 
\begin{equation}\label{ResA}
    \xymatrix{ 0\ar[r]&
     \PP^{m_{N}}\ar[r]^{A_{N-1}}&\cdots \ar[r]&\PP^{m_2}\ar[r]^{A_1}& \PP^{m_1}\ar[r]^-{A_0}&\PP^{m_0}\ar[r]^-\pi&\mathrm{coker} A\ar[r]&0\,,
    }
\end{equation}
where $\ker A_{i-1}=\mathrm{Im}\, A_{i}$ (exactness) and $N\leq d$. This is the content of {\it Hilbert's syzygy theorem}. The matrix $A_0$ defines a {\it finite presentation} of the module $\mathrm{coker} A$. In particular, one can always take $m_1=m$, $m_0=n$, and $A_0=A$. Then the exactness of the sequence (\ref{ResA}) implies that any solution to the equation $AX=0$ can be written as $X=A_1Y$ for some $Y\in \MM_{m_2,1}(\PP)$. Thus, the sequence of matrices $(A_1, A_2, \ldots, A_{N-1})$ may be thought of as parameterizing the `general
solution' to the linear system $AX=0$. By abuse of  language, we will sometimes refer to  the exact sequence (\ref{ResA}) as a (left)  resolution of  $A$.  The number $N$ is called the {\it length} of the resolution. Neither the resolution (\ref{ResA}) nor its length is uniquely determined by the matrix $A$. The only invariant meaning has the minimum length of a free resolution. It is called the {\it projective dimension} of the module $\cc\, A$. The projective dimension measures how far a $\PP$-module  is from being free.

We say that the two matrices $A\in \MM_{n,m}(\PP)$ and $B\in \MM_{s,l}(\PP)$ are {\it equivalent} (in writing $A\sim B$) if $\cc\, A\simeq \cc\, B$ as $\PP$-modules. It is known that the last condition  takes place iff the corresponding free resolutions are homotopy equivalent, that is, there exist homomorphisms $F_i$, $G_i$, $h^A_i$, and $h^B_i$ fitting the diagram
$$
    \xymatrixcolsep{5pc}\xymatrix{
    \ldots\ar@/^/[r]^{A_2}& \PP^{m_2}\ar@/^/[r]^{A_1}\ar@/^/[l]^{h^A_{2}}\ar@/^/[d]^{G_{2}}& \PP^m\ar@/^/[r]^{A}\ar@/^/[l]^{h^A_{1}}\ar@/^/[d]^{G_1}& \PP^n\ar[r]^\pi\ar@/^/[l]^{h^A_0}\ar@/^/[d]^{G_{0}}& \cc\, A\ar@/^/[d]^{g}\\
    \ldots\ar@/^/[r]^{B_2}& \PP^{l_2}\ar@/^/[r]^{B_1}\ar@/^/[l]^{h^B_{2}}\ar@/^/[u]^{F_{2}}& \PP^l\ar@/^/[r]^{B}\ar@/^/[l]^{h^B_{1}}\ar@/^/[u]^{F_1}& \PP^s\ar[r]^\pi\ar@/^/[l]^{h^B_{0}}\ar@/^/[u]^{F_{0}}& {\cc\, B}\ar@/^/[u]^{f}    }    
$$
such that
\begin{equation}\label{FGh}
\begin{array}{ll}
    A_iF_{i+1}=F_{i}B_i\,,&  B_iG_{i+1}= G_{i}A_i\,,\\[3mm]
    h^A_iA_i +A_{i+1}h^A_{i+1} =1-F_{i+1}G_{i+1}\,,\quad& B_{i+1}h^B_{i+1}+h^B_iB_i=1-G_{i+1}F_{i+1}
    \end{array}
\end{equation}
for $i=0,1,2,\ldots$ and $A_0=A$. If one  regards Rels. (\ref{FGh}) as equations for unknown homomorphisms $F_i$, $G_i$, $h^A_i$, and $h^B_i$,  then the existence of a solution to the equations with $i=0$ ensures the existence of all other homomorphisms, see e.g. \cite[Prop. 1.2.8]{Tarkh}.  The mutually inverse isomorphisms $$f:\cc\, B\rightarrow \cc\, A\,,\qquad g:\cc\,A\rightarrow \cc\,B$$ are induced  by the homomorphisms $F_{0}$ and $G_{0}$, respectively.

When doing linear algebra over  polynomial rings, it might be useful to extend the ring $\PP$ to its field of fractions $\RR=\mathbb{R}(p_1,\ldots,p_d)$. The elements of the field $\RR$ are just rational functions in $p$'s. Since $\PP\subset \RR$, the field $\RR$ comes with the natural structure of a $\PP$-module. This module is known to be {\it flat}, meaning that the covariant functor $\RR\otimes_\PP -$ maps exact sequences of $\PP$-modules to exact.  The obvious isomorphism $\RR\otimes_\PP \PP^n\simeq \RR^n$ allows  one to pass easily from the free $\PP$-module $\PP^n$ to the $\RR$-vector space $\RR^n$. 

Recall that a $\PP$-module $\FF$ is called {\it injective} if the  contravariant functor $\HH_\PP(-, \FF)$ transforms exact sequences to exact ones. Applying such a functor to the exact sequence (\ref{ResA}) with $A_0=A$ and using the obvious isomorphism $\HH_\PP(\PP^n,\FF)\simeq\FF^n$, one obtains the new exact sequence

\begin{equation}\label{RAF}
    \xymatrix{ 0&
    \FF^{m_N}\ar[l]& \cdots\ar[l]_-{A^t_{N}}& \FF^m\ar[l]_-{A^t_1}&\FF^n \ar[l]_-{A^t}&\HH_\PP(\cc\, A, \FF)\ar[l]&0\ar[l]\,,
    }
\end{equation}
where by $A^t$ we denote the matrix transpose to $A$. In particular,
\begin{equation}\label{Mal}
    \ker_\FF A^t\simeq \HH_\PP(\cc\, A, \FF)\,.
    \end{equation}
This isomorphism is due to Malgrange \cite{Malgr}. It establishes a relation between the solutions to the linear system $A^tX=0$ for  $X\in \MM_{n,1}(\FF)$ and the module $\cc\, A$. 

An injective $\PP$-module $\FF$ is called an {\it injective cogenerator} if $\HH_\PP(M, \FF)\neq 0$ for any nontrivial $\PP$-module $M$. A typical example of an injective cogenerator for the polynomial ring $\PP$ is provided by the space of formal power series $\PP^\ast=\mathbb{R}[[x_1,\ldots, x_d]]$, the dual of the $k$-vector space $\PP$. Another example is the smooth real-valued functions $C^\infty(\Omega)$ on an open convex domain $\Omega\subset \mathbb{R}^d$. In both  cases the generators of the ring act as partial derivatives, $p_i f={\partial f}/{\partial x^i}$. It follows from the definition that once $\cc\, A\neq 0$, the equation  $A^t X=0$ has a nonzero solution whenever one looks for an $n\times 1$-matrix  $X$ with entries in an injective cogenerator. Roughly speaking, the injective cogenerator condition on $\cal F$ plays the same role as the condition of an algebraically closed base field in classical algebraic geometry.

\subsection{Quillen--Suslin theorem}
A matrix $A\in \MM_{n,m}(\PP)$ is called {\it unimodular} if it admits either a left or  right inverse matrix $B$, that is,
\begin{equation}
    BA=1_m\quad \mbox{or}\quad AB=1_n\,.
\end{equation}
It is clear that the first (resp. second) identity implies $n\geq m$ (resp. $m\geq n$). We will denote the set of all $n\times m$ unimodular matrices by $U_{n,m}(\PP)$.  The unimodular  matrices of $U_{n,n}(\PP)$, being invertible, form a group denote by $U_n(\PP)$. The groups $U_n(\PP)$ and $U_m(\PP)$ naturally act on  $\MM_{n,m}(\PP)$ from the left and right, and one may wonder what is the  simplest form that every matrix of $\MM_{n,m}(\PP)$ can be reduced to by this action. 
In the case of unimodular matrices, the answer to this question is given by the Quillen--Suslin theorem \cite{Lam1}.

\begin{theorem}
    \label{theor6.1} For any $A\in U_{n, m}(\PP)$ with $n\leq m$, there exists a matrix $U\in U_m(\PP)$ such that 
    \begin{equation}\label{AU}
    AU=\left(\begin{array}{ccccccccc}
       1&0&\cdots&0&  0&\cdots &0\\
       0&1&\cdots&0&  0&\cdots &0\\
  \vdots&\vdots&\ddots&\vdots&\vdots&\ddots&\vdots\\
        0 &0& \cdots &1&0&\cdots&0
    \end{array}\right)=(1_n,  0)\,.
    \end{equation}
\end{theorem}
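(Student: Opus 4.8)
The plan is to translate the matrix statement into the language of modules, apply the deep content of the Quillen--Suslin theorem, and then reassemble the required matrix $U$ from the resulting data. First I would unpack what unimodularity means here. Since $n\leq m$, the matrix $A$ possesses a right inverse $B\in\MM_{m,n}(\PP)$ with $AB=1_n$. Reading $A$ as a homomorphism $A\colon\PP^m\to\PP^n$, the identity $AB=1_n$ says that $A$ is surjective and that $B$ is a section of it. Hence the short exact sequence
\begin{equation*}
    0\longrightarrow \ker A\longrightarrow \PP^m\xrightarrow{\,A\,}\PP^n\longrightarrow 0
\end{equation*}
splits, yielding a decomposition $\PP^m\simeq \ker A\oplus B(\PP^n)$ with $B(\PP^n)\simeq\PP^n$. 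In particular $\ker A$ is a direct summand of the free module $\PP^m$, hence a finitely generated projective $\PP$-module. Tensoring the sequence with the flat module $\RR$ turns it into an exact sequence of $\RR$-vector spaces, so $\ker A$ has rank $m-n$.

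The second step is to invoke the essential content of the theorem. Serre's conjecture, established by Quillen and Suslin, asserts that every finitely generated projective module over $\PP=\mathbb{R}[p_1,\ldots,p_d]$ is free. Applying this to $\ker A$ gives $\ker A\simeq\PP^{m-n}$. I would then fix a free basis of $\ker A$ and collect it as the columns of a matrix $U_2\in\MM_{m,m-n}(\PP)$, so that $\mathrm{Im}\,U_2=\ker A$ and $AU_2=0$; by freeness of the summand, $U_2$ admits a left inverse.

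The third step is the assembly and verification. Set $U=(B\mid U_2)\in\MM_{m,m}(\PP)$. Then $AU=(AB\mid AU_2)=(1_n\mid 0)$, which is exactly the desired normal form. It remains to check $U\in U_m(\PP)$, i.e.\ that $U$ is invertible over $\PP$; equivalently, that the columns of $B$ together with those of $U_2$ form a free basis of $\PP^m$. This is precisely the content of the splitting $\PP^m=B(\PP^n)\oplus\ker A$: the columns of $B$ freely generate the first summand (as $B$, being a section, is injective) and the columns of $U_2$ freely generate the second. An explicit inverse can be written down using the idempotents $BA$ and $1_m-BA$, which project onto $B(\PP^n)$ and $\ker A$ respectively, together with the left inverse of $U_2$; I would not grind through this routine identity.

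The main obstacle is not in any of the above: once the freeness of finitely generated projective $\PP$-modules is granted, the whole reduction is formal homological bookkeeping. The genuine difficulty is that very freeness result, which is the hard, nontrivial heart of the statement. If one is barred from citing it, the real work is an induction on the number of variables $d$: the base case $d=1$ is classical, since $\mathbb{R}[p_1]$ is a principal ideal domain over which finitely generated torsion-free modules are free, while the inductive step is the deep part, handled either by Quillen's local--global patching (a module extended from $\mathbb{R}[p_1,\ldots,p_{d-1}]$ that is locally free is already free) or by Suslin's completion of unimodular rows through elementary operations. I would treat this inductive step as the crux and, following the reference \cite{Lam1}, import it as a known theorem rather than reprove it.
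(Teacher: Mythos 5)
Your reduction is correct: splitting the exact sequence via the right inverse $B$, invoking the Quillen--Suslin freeness of the projective summand $\ker A$, and assembling $U=(B\mid U_2)$ is exactly the standard equivalence between the matrix normal form and the ``finitely generated projective modules over $\PP$ are free'' statement. The paper itself offers no proof---it states the theorem and cites \cite{Lam1}---so your treatment matches its intent, correctly isolating the deep inductive core as the part to be imported rather than reproved.
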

 The matrix on the right represents the canonical form of a unimodular matrix.  
 It follows from the definition of $U$ that
$$
U^{-1}=\left(\begin{array}{c}
    A \\
    A^\vee 
\end{array}\right)
$$ 
for some unimodular $(m-n)\times m$-matrix $A^\vee $. We will refer to $A^\vee$  as a {\it unimodular completion} of the matrix $A$. This gives an alternative formulation of the Quillen--Suslin theorem: Every unimodular matrix can be completed to an invertible one. Writing the matrix $U$ in the block form $U=(B, B^\vee)$, where $B^\vee$ consists of the last $m-n$ columns of $U$, one finds that $AB=1_n$ and $AB^\vee=0$.  The columns of the unimodular matrix $B^\vee$ define the basis of $\ker A$ and we arrive at the short exact sequence  
\begin{equation}\label{SRA}
   \xymatrix{ 0\ar[r] & \PP^{m-n} \ar[r]^{B^\vee}& \PP^m \ar[r]^{A}& \PP^n\ar[r] &0}\,,
\end{equation}
which can be viewed as the Hilbert resolution of the trivial module $\cc\, A=0$. By construction, the sequence splits as $\PP^m=\mathrm{Im}\, B\oplus \mathrm{Im}\, B^\vee$. 

Finally, a unimodular matrix $A\in U_n(\PP)$ is called {\it elementary} if it has either the upper or lower triangular form. For $n\neq 2$, the elementary matrices are known to generate the whole group $U_n(\PP)$, see \cite[Ch. VI.4.5]{Lam1}. This means that one can bring every unimodular matrix into the canonical form by a sequence of elementary column (or row) transformations.  The case $n=2$ is special. For example, it is known that the  following unimodular matrix 
\begin{equation}
    \left(\begin{array}{cc}
    1+xy&x^2\\
    -y^2& 1-xy
    \end{array}\right)
\end{equation}
cannot be written as a finite product of elementary matrices \cite[Ch. I.8]{Lam1}.

\subsection{Resolutions of differential operators with constant coefficients}
Let  $\mathcal{F}$ denote the ring of smooth  functions on an open convex domain $\Omega\subset \mathbb{R}^d$. Then the free $\mathcal{F}$-module $\mathcal{F}^n$ consists of collections of functions $f=(f_1,\ldots, f_n)$ with $f_a\in C^\infty(\Omega)$.  Denote by $\partial_i=\partial/\partial x^i$ the partial derivatives w.r.t. the Cartesian coordinates $(x^1,\ldots,x^d)$ in $\mathbb{R}^d$.
Given a matrix $A=\big(A_{a\alpha}(p)\big)$ of $\MM_{n,m}(\PP)$, define the differential operator $\hat{A}: \FF^n\rightarrow \FF^m$ by the rule
$$
(\hat A f)_\alpha= \sum_{a=1}^n A_{a\alpha}(\partial) f_a\,,\qquad \alpha=1,\ldots,m\,,
$$
where $A_{a\alpha}(\partial)$ is obtained from $A_{a\alpha}(p)$ by the substitution $p_i\rightarrow \partial_i$.
We will refer to the matrix $A$ as the {\it symbol} of a differential operator $\hat A$. By differential operators on $\Omega$ with constant coefficients we understand the operators that are obtained by the above assignment $A\mapsto \hat A$.  The $\mathbb{R}$-space of all differential operators  $\hat A: \FF^n\rightarrow \FF^m$ with constant coefficients will be denoted by $\OO_{n,m}(\FF)$. If $\hat B$ is an operator of $\OO_{m,l}(\mathcal{F})$, then it follows from the definition that 
\begin{equation}
    \hat B \hat A=\widehat{AB}\,.
\end{equation}
(Notice the change of order.) 

The space of smooth functions $\FF$, considered as a $\PP$-module under the assignment $p_i\rightarrow \partial_i$, is known to be injective and the Malgrange isomorphism (\ref{Mal}) reduces the study of linear PDEs $\hat A f=0$ to the study of their symbols $A$. In particular, two equations have isomorphic solution spaces iff their symbols  are equivalent.  

In this paper, we focus on particular classes of operators. Of most interest to us are operators of $\OO_{n,n}(\FF)$, $n=1,2,\ldots$, which  we call {\it pre-Lagrangian}. By definition, the symbol of a pre-Lagrangian operator is a square matrix $A\in \MM_{n,n}(\PP)$. The commutative ring $\PP$ enjoys the involution $a(p)\mapsto a^\dagger(p)=a(-p)$, which extends to the involution of the $\PP$-module $\MM_{n,n}(\PP)$ as 
$A(p)\mapsto A^\dagger(p)=A^t(-p)$. If one regards $\MM_{n,n}(\PP)$ as a ring, then the involution above defines an anti-isomorphism of order two, i.e., $(AB)^\dagger=B^\dagger A^\dagger$. 

A pre-Lagrangian operator $\hat A\in \OO_{n,n}(\FF)$ is called {\it Lagrangian} if $A^\dagger=A$. 
The terminology is obvious: If $\hat A$ is Lagrangian, then, under suitable boundary conditions, one can obtain the equation $\hat Af=g$ for unknown $f\in \FF^n$ from the variational principle for the action functional 
$$
S[f]=\int_\Omega d^dx  \Big(\frac12 f^t\hat Af-f^tg\Big)\,.
$$
Notice that whenever (\ref{ResA}) defines a resolution of the symbol $A$, the nonhomogeneous equation $\hat Af=g$
implies the equality $\hat A_1 g=0$, called usually the compatibility condition.  Correspondingly, the operator $\hat A_1$ is referred to as a {\it compatibility operator}.  

\begin{example} Let us add sources to equations (\ref{KG}) to get the nonhomogeneous linear system  
    \begin{equation}\label{jj}
    (\square -m^2)\phi_{\mu_1\cdots\mu_s}=j'_{\mu_1\cdots\mu_s}\,,\qquad \partial^{\mu_1}\phi_{\mu_1\cdots \mu_s}=j''_{\mu_2\cdots \mu_s}\,.
\end{equation}
Applying the Klein--Gordon operator to the second equation and subtracting the divergence of the first one yields the following constraint on the sources: 
\begin{equation}
      (\square -m^2)j''_{\mu_2\cdots \mu_s}-\partial^{\mu_1}j'_{\mu_1\cdots\mu_s}  =0\,.
\end{equation}
The matrix differential operator determining this constraint is the compatibility operator for the wave operator in (\ref{jj}). 

\end{example}

We say that a pre-Lagrangian operator $\hat A$ is equivalent to a Lagrangian one if there exists a unimodular matrix $U$ such that the operator $\hat U\hat A$ is Lagrangian, i.e.,
\begin{equation}\label{ALA}
    AU=U^\dagger A^\dagger \,.
\end{equation}
Clearly, $ A U \sim A$, so that the equations $\hat U \hat A f=0$ and $\hat Af=0$ have the same solution space for any $\FF$. In the context of the inverse problem of the calculus of variations, the operator $\hat{U}$, making pre-Lagrangian differential equations Lagrangian, is called a {\it variational multiplier}. We do not know any effective criterion for deciding whether a given pre-Lagrangian operator  admits a variational multiplier. Computing the determinant of the left- and right-hand sides of (\ref{ALA}), one obtains the following necessary condition:
\begin{equation}
    \det A^\dagger=\det A\,.
\end{equation}
For example, the operator determining the ordinary differential equation $d^k f/dx^k=0$ is (equivalent to) Lagrangian iff $k$ is even. 
To formulate finer necessary conditions, we need more notation. Let $A\in \MM_{n,n}(\PP)$ and $r\leq n$. 
An $r$-minor of $A$ is the determinant of any $r\times r$-submatrix of $A$. Let $\Delta_r(A)$ denote
the set of all $r$-minors of the matrix $A$ and  $d_r(A)$ denote the g.c.d. of the polynomials of $\Delta_r(A)$. Define the ideals $\mathcal{I}_r(A)=\langle \Delta_r(A)\rangle$ and $\mathcal{J}_r(A)=\langle \Delta_r(A)/d_r(A)\rangle$ generated, respectively, by all $r$-minors $\Delta_r(A)$ and their quotients of division by $d_r(A)$. $\mathcal{I}_r(A)$ is known as the $(n-r)$-th Fitting ideal of the module $\mathrm{coker} A$. It is clear that $\mathcal{I}_r(A)\subset \mathcal{I}_{r-1}(A)$. By applying the generalized Cauchy--Binet formula one can prove the following  

\begin{proposition}
    If a unimodular matrix $U \in U_n(\PP)$ satisfies Eq. (\ref{ALA}), then 
    \begin{equation}
    \mathcal{I}_r(A)=\mathcal{I}_r(A^\dagger)\,,\qquad d_r(A)=d_r(A^\dagger)\,,\qquad \mathcal{J}_r(A)=\mathcal{J}_r(A^\dagger)\,,\qquad \forall  r=1,2,\ldots, n\,.
    \end{equation}  
\end{proposition}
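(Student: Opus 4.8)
The plan is to start from Eq.~(\ref{ALA}), $AU = U^\dagger A^\dagger$, and track how unimodular multiplication affects the families of minors. The key tool is the generalized Cauchy--Binet formula, which expresses any $r$-minor of a product $PQ$ as a $\PP$-linear combination of products of $r$-minors of $P$ and $r$-minors of $Q$. Concretely, for index sets $I,J$ of size $r$,
\begin{equation*}
(PQ)_{I,J} = \sum_{|K|=r} P_{I,K}\,Q_{K,J}\,,
\end{equation*}
where the subscripts denote the corresponding minors. Reading this off for $P=A$, $Q=U$ shows that every $r$-minor of $AU$ lies in the ideal $\mathcal{I}_r(A)$, whence $\mathcal{I}_r(AU)\subseteq \mathcal{I}_r(A)$. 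The first step is therefore to establish the general lemma that right (or left) multiplication by \emph{any} matrix can only shrink the $r$-th determinantal ideal, and that multiplication by a \emph{unimodular} matrix preserves it. For the latter, I would use that $U\in U_n(\PP)$ is invertible with $U^{-1}$ again a polynomial matrix, so $\mathcal{I}_r(AU)\subseteq \mathcal{I}_r(A) = \mathcal{I}_r(AUU^{-1})\subseteq \mathcal{I}_r(AU)$, forcing equality. This gives $\mathcal{I}_r(A)=\mathcal{I}_r(AU)$ for all $r$.

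The second step is to feed in the hypothesis. Combining the previous step with Eq.~(\ref{ALA}) yields
\begin{equation*}
\mathcal{I}_r(A) = \mathcal{I}_r(AU) = \mathcal{I}_r(U^\dagger A^\dagger)\,.
\end{equation*}
Now I apply the left-multiplication version of the same invariance: since $U^\dagger$ is again unimodular (the involution $\dagger$ preserves invertibility, as it is an anti-automorphism of order two), left multiplication by $U^\dagger$ does not change the determinantal ideals, so $\mathcal{I}_r(U^\dagger A^\dagger)=\mathcal{I}_r(A^\dagger)$. Chaining the equalities gives $\mathcal{I}_r(A)=\mathcal{I}_r(A^\dagger)$, the first assertion. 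This identification of $\mathcal{I}_r(A)$ with the $(n-r)$-th Fitting ideal of $\cc\,A$, noted in the text, also makes the invariance conceptually transparent: equivalent operators have isomorphic cokernels and hence equal Fitting ideals.

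The third step handles the gcd statement $d_r(A)=d_r(A^\dagger)$. Here I would argue that $d_r$ is recoverable from the ideal data already controlled: $d_r(A)$ is the gcd of a generating set of $\mathcal{I}_r(A)$, and since $\PP$ is a unique factorization domain the gcd of the generators of a finitely generated ideal is an invariant of the ideal itself (it is the generator of the smallest principal ideal containing $\mathcal{I}_r$). Because $\mathcal{I}_r(A)=\mathcal{I}_r(A^\dagger)$, their gcds agree up to a unit, which in $\PP$ means up to a nonzero real scalar; one fixes the normalization so that $d_r(A)=d_r(A^\dagger)$. Finally, for the third identity I observe that $\mathcal{J}_r$ is defined as the ideal generated by the quotients $\Delta_r/d_r$, so $\mathcal{J}_r(A)$ is obtained from $\mathcal{I}_r(A)$ by dividing through by the common factor $d_r(A)$; since both $\mathcal{I}_r$ and $d_r$ are now known to coincide for $A$ and $A^\dagger$, the quotient ideals coincide as well, giving $\mathcal{J}_r(A)=\mathcal{J}_r(A^\dagger)$.

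The main obstacle I anticipate is the bookkeeping in the third step rather than any conceptual difficulty: making the passage from ``$\mathcal{I}_r$ equal'' to ``$d_r$ equal'' fully rigorous requires care that the gcd is genuinely a function of the ideal and not merely of a chosen generating list, and one must invoke that $\PP$ is a UFD to make sense of the gcd at all. The Cauchy--Binet bound itself is routine, and the unimodular invariance is a clean two-line inclusion argument; the delicate point is keeping the scalar ambiguities in the gcd under control so that the three displayed equalities hold on the nose and the Fitting-ideal interpretation remains consistent.
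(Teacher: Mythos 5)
Your argument is correct and follows exactly the route the paper intends: the paper gives no written proof beyond the remark that the result follows from the generalized Cauchy--Binet formula, and your chain (Cauchy--Binet bound, invariance of $\mathcal{I}_r$ under unimodular left/right multiplication applied to both sides of $AU=U^\dagger A^\dagger$, then recovering $d_r$ and $\mathcal{J}_r$ from the ideal data in the UFD $\PP$) fills in precisely that sketch. No gaps worth flagging.
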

For $r=n$, one gets the stronger condition $\Delta_n(A^\dagger)=\Delta_n(A)$.
In the absence of effective criteria for the solvability of equation (\ref{ALA}), we confine ourselves to studying a simpler question about the existence of an equivalent  pre-Lagrangian operator.

Following the terminology of Ref. \cite{Fabia}, we say that an operator $\hat A$ is {\it co-flat} if its symbol admits a one-step resolution 
\begin{equation}\label{A1A}
    \xymatrix{0\ar[r]&
    \PP^{l}\ar[r]^-{A_1}& \PP^{m}\ar[r]^{A}&\PP^{n}\ar[r]&\cc\, A\ar[r]&0\,,
    }
\end{equation}
where the matrix $A_1$ is unimodular. As we will see in the next section, co-flat operators are plentiful in relativistic physics.
Let us start with the following simple 
\begin{proposition}
Any two matrices $A_1$ and $A'_1$ associated with one-step resolutions of a co-flat symbol $A$ are related by the equality
$A'_1=A_1U$ for some $U\in U_{l}(\PP)$. In particular,  both matrices are unimodular and have the same size. 
\end{proposition}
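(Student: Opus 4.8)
The plan is to run everything off the exactness of the two one-step resolutions, which is the only structural input really needed. Writing them as
$0\to\PP^{l}\xrightarrow{A_1}\PP^{m}\xrightarrow{A}\PP^{n}$ and
$0\to\PP^{l'}\xrightarrow{A'_1}\PP^{m}\xrightarrow{A}\PP^{n}$,
exactness at the left node says both $A_1$ and $A'_1$ are injective, while exactness at the middle node gives $\mathrm{Im}\,A_1=\ker A=\mathrm{Im}\,A'_1$. Thus both connecting maps are isomorphisms onto the common submodule $\ker A\subset\PP^{m}$, and I will never need to know anything else about $A$ itself.

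First I would produce the two factorizations. Since every column of $A'_1$ lies in $\mathrm{Im}\,A'_1=\ker A=\mathrm{Im}\,A_1$ and $\PP^{l}$ is free, each such column is $A_1u$ for some $u\in\PP^{l}$; collecting these gives a genuine polynomial matrix $U$ with $A'_1=A_1U$. By the same reasoning applied with the roles reversed, $A_1=A'_1V$ for some polynomial $V$. Substituting one relation into the other yields $A_1=A_1(UV)$ and $A'_1=A'_1(VU)$. Now I would cancel on the left using injectivity at the level of matrices: if $A_1M=0$ then every column of $M$ lies in $\ker A_1=0$, so $M=0$. Applying this to $M=1_l-UV$ (using injectivity of $A_1$) and to $M=1_{l'}-VU$ (using injectivity of $A'_1$) gives $UV=1_l$ and $VU=1_{l'}$. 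Hence $U$ is an invertible polynomial matrix; this forces $l=l'$ by invariance of rank over the commutative ring $\PP$, so $U\in U_l(\PP)$ and $A'_1=A_1U$, as claimed.

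It remains to show that the individual matrices $A_1$ and $A'_1$ are unimodular, and only here would I invoke the co-flat hypothesis. By definition co-flatness provides at least one one-step resolution whose connecting map $A_1^{(0)}$ is unimodular, say $B_0A_1^{(0)}=1$. Applying the previous paragraph to the pair $(A_1^{(0)},A_1)$ produces a unit $U_0$ with $A_1=A_1^{(0)}U_0$, whence $U_0^{-1}B_0$ is a left inverse of $A_1$; the same argument applied to $A'_1$ shows it too is unimodular. Together with $l=l'$ this gives the ``in particular'' claim.

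The step needing the most care is the left-cancellation: the two identities $UV=1_l$ and $VU=1_{l'}$ arise from injectivity of $A_1$ and of $A'_1$ \emph{respectively}, so both resolutions must be kept on an equal footing — one map alone does not deliver both. The factorizations are otherwise routine, since image containment plus freeness guarantees $U$ and $V$ are honest polynomial (not merely rational) matrices. The genuinely essential use of co-flatness is confined to the last paragraph: it is needed only to promote the relation $A'_1=A_1U$ to unimodularity of each matrix, because a priori neither of the two given connecting maps is the distinguished unimodular one supplied by the definition.
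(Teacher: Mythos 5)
Your proof is correct, and it reaches the same conclusions by a noticeably different route from the paper's. Both arguments begin by producing comparison maps in each direction between the two resolutions (the paper gets $F_2,G_2$ with $A_1F_2=A'_1$ and $A'_1G_2=A_1$ from the general homotopy-equivalence machinery of Eqs.\ (\ref{FGh}); you get the same factorizations by hand, lifting columns through the common image $\ker A$). The divergence is in how invertibility is extracted. The paper first proves $l=l'$ by a separate kernel-counting argument ($l'>l$ would force $\ker F_2\neq 0$, contradicting injectivity of $A'_1$), and only then inverts $G_2$ by hitting $A'_1G_2=A_1$ with a left inverse $B$ of $A_1$ --- so it leans on the unimodularity of the distinguished $A_1$ from the co-flat definition at that step. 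You instead compose the two factorizations and left-cancel the injective maps to get $UV=1_l$ and $VU=1_{l'}$ outright; invertibility of $U$ is then two-sided from the start, and $l=l'$ falls out of the invariant basis number property of the commutative ring $\PP$ rather than a kernel argument. This makes your version of the main claim slightly more general (it holds for any two one-step resolutions of any symbol, co-flat or not), and it isolates the role of co-flatness cleanly: you invoke it only at the end, comparing against the distinguished unimodular resolution to transfer a left inverse to $A_1$ and $A'_1$. The paper's reading is a bit narrower --- it takes $A_1$ to be the unimodular matrix of the defining resolution (\ref{A1A}) --- but the two treatments are equivalent in substance. The only cosmetic slip in your write-up is that the freeness you need when assembling $U$ is that of the domain $\PP^{l'}$ of $A'_1$ (so that $U$ can be defined on basis vectors), not of $\PP^{l}$; this does not affect the argument.
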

\begin{proof}
Let
\begin{equation}\label{A1A'}
    \xymatrix{0\ar[r]&
    \PP^{l'}\ar[r]^-{A'_1}& \PP^{m}\ar[r]^{A}&\PP^{n}\ar[r]&\cc\, A\ar[r]&0\,,
    }
\end{equation}
be another one-step resolution of the matrix $A$.   As we know the resolutions (\ref{A1A}) and (\ref{A1A'})  are homotopy equivalent to each other. Furthermore, we can satisfy Eqs. (\ref{FGh}) with  $F_1=G_1=1_m$ and $F_{0}=G_{0}=1_n$. Then there are matrices  $F_2\in \MM_{ll'}(\PP)$ and $G_2\in \MM_{l'l}(\PP)$ such that
\begin{equation}\label{AFG}
    A_1F_2= A'_1\,,\qquad A'_1G_2=A_1\,.
\end{equation}
If $l'>l$, then  $\ker F_2\neq 0$. On the other hand, $\ker A'_1=0$ and restricting the first equation in (\ref{AFG}) onto $\ker F_2$, we arrive at the contradiction. 
Similarly,  the assumption $l>l'$ contradicts the facts that $\ker G_2\neq 0$, while  $\ker A_1=0$. Hence, $l=l'$ and both $F_2$ and $G_2$ are square matrices. 
Let $B$ be a left inverse for the matrix $A_1$. Applying $B$ to both sides of the second equation in (\ref{AFG}) yields $BA'_1G_2=1_l$. This means that the square matrix $G_2$ is unimodular and we can set $U=G_2^{-1}$. 
\end{proof}

For a general co-flat operator $\hat A$, the ranks of the free modules in (\ref{A1A}) are not related to each other. 
We say that a differential operator $\hat A\in \OO_{n,m}(\FF)$ is {\it gauge-free} if $\ker A^t=0$.  It is clear that $n\geq m$ whenever the operator $\hat A$ is gauge-free. For example, the relativistic wave equations (\ref{KG}) enjoy no gauge symmetry and the corresponding operator is gauge-free.  If $\hat A$ is a pre-Lagrangian gauge-free operator, then  $\det A\neq 0$. 

\begin{proposition}\label{p24}
    Suppose  a co-flat operator $\hat A\in \OO_{n,m}(\FF)$ is gauge-free. Then the ranks of the modules in the one-step resolution (\ref{A1A}) are related by the equality $m=l+n$. 
\end{proposition}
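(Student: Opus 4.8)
The plan is to compare the ranks of the $\PP$-modules in the one-step resolution (\ref{A1A}) after extending scalars to the field of fractions $\RR=\mathbb{R}(p_1,\ldots,p_d)$. For a finitely generated $\PP$-module $M$ I set $\mathrm{rank}\,M=\dim_\RR(\RR\otimes_\PP M)$; since $\RR$ is flat over $\PP$, this rank is additive along exact sequences. Applying the functor $\RR\otimes_\PP-$ to the exact sequence (\ref{A1A}) and using $\RR\otimes_\PP\PP^k\simeq\RR^k$, I obtain an exact sequence of finite-dimensional $\RR$-vector spaces
\[
0\longrightarrow \RR^{l}\stackrel{A_1}{\longrightarrow}\RR^{m}\stackrel{A}{\longrightarrow}\RR^{n}\longrightarrow \RR\otimes_\PP\cc\,A\longrightarrow 0\,.
\]
Because the alternating sum of dimensions along an exact sequence of vector spaces vanishes, this yields the single numerical identity $l-m+n-\dim_\RR(\RR\otimes_\PP\cc\,A)=0$.

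It then remains to evaluate the last term by means of the gauge-free hypothesis. By definition, gauge-freeness reads $\ker A^t=0$, i.e. the transpose map $A^t\colon\PP^n\to\PP^m$ is injective; by flatness it stays injective after tensoring with $\RR$, so $A^t$ has rank $n$ over $\RR$. As row and column ranks agree over a field, $A\colon\RR^m\to\RR^n$ also has rank $n$ and is therefore surjective, whence $\RR\otimes_\PP\cc\,A=\RR^{n}/\mathrm{Im}\,A=0$; equivalently, $\cc\,A$ is a \emph{torsion} module. Substituting $\dim_\RR(\RR\otimes_\PP\cc\,A)=0$ into the numerical identity gives $l-m+n=0$, that is $m=l+n$, as claimed.

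The conceptual crux is the translation of the gauge-free condition—a statement about $A^t$—into the vanishing of $\RR\otimes_\PP\cc\,A$, i.e. into $\cc\,A$ being torsion; once this is recognized, the conclusion is a one-line Euler-characteristic count. The only technical points are that injectivity of $A^t$ over $\PP$ survives the flat extension of scalars to $\RR$ and that rank is transposition-invariant over the field $\RR$, both of which are routine. I note, finally, that the argument uses only the exactness of the length-one resolution together with gauge-freeness: the unimodularity of $A_1$ built into the definition of co-flatness is not needed for the count itself, though it is precisely what guarantees that such a one-step resolution exists.
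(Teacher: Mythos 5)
Your proof is correct and follows essentially the same route as the paper: extend scalars to the fraction field $\RR$, use gauge-freeness to see that $A$ becomes surjective over $\RR$ (the paper deduces this by dualizing the injection $A^t:\RR^n\to\RR^m$, you by the equality of row and column rank, which is the same fact), and conclude by an Euler-characteristic count. The only cosmetic difference is that you carry the cokernel term through the tensored sequence and show it vanishes, whereas the paper glues the two exact sequences into a short exact sequence $0\to\RR^l\to\RR^m\to\RR^n\to 0$ and splits it.
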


\begin{proof}
Applying the functor $\RR\otimes_\PP -$ to the exact sequence 
\begin{equation}\label{A1A0}
    \xymatrix{0\ar[r]&
    \PP^{l}\ar[r]^-{A_1}& \PP^{m}\ar[r]^{A}&\PP^{n}
    }
\end{equation}
yields  the exact sequence of $\RR$-vector spaces 
\begin{equation}\label{RRR}
    \xymatrix{0\ar[r]&
    \RR^{l}\ar[r]^-{A_1}& \RR^{m}\ar[r]^{A}&\RR^{n}\,.
    }
\end{equation}
The requirement for the operator $\hat A$ to be gauge-free gives two more exact sequences 
\begin{equation}
 0 \longrightarrow {\PP}^n \stackrel{A^t}{\longrightarrow}{\PP}^m \qquad \Rightarrow\qquad 0 \longrightarrow \mathcal{R}^n \stackrel{A^t}{\longrightarrow}\mathcal{R}^m\,.
 \end{equation}
 Since the dualization of finite-dimensional vector spaces preserves exactness, we also have the exact sequence  
\begin{equation}\label{RR}
 \mathcal{R}^m\stackrel{A}{\longrightarrow}\mathcal{R}^n \longrightarrow 0\,.
 \end{equation}
 By gluing   (\ref{RRR}) with (\ref{RR}), we obtain the short exact sequence 
 \begin{equation}\label{A1AR}
    \xymatrix{0\ar[r]&
    \RR^{l}\ar[r]^-{A_1}& \RR^{m}\ar[r]^{A}&\RR^{n}\ar[r]&0\,.
    }
\end{equation} 
It remains to note that every short exact sequence of finite-dimensional vector spaces necessary splits,  $\RR^m\simeq\RR^l\oplus \RR^n$, so that $m=l+n$.  Another argument: the Euler characteristic of an  acyclic complex of finite-dimensional vector spaces must be zero, i.e., $\chi=l-m+n=0$. 

 \end{proof}

\begin{proposition}\label{p25}
    Every co-flat matrix $A$ is equivalent to a matrix $B$ with $\ker B=0$. 
\end{proposition}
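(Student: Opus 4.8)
The plan is to exploit the one-step resolution
\[
0 \to \PP^l \xrightarrow{A_1} \PP^m \xrightarrow{A} \PP^n \to \cc A \to 0
\]
supplied by the co-flat hypothesis, together with the unimodularity of $A_1$. The guiding idea is that unimodularity lets us change the basis of $\PP^m$ so that the inclusion $A_1$ of $\ker A=\mathrm{Im}\,A_1$ becomes the standard coordinate inclusion. In that adapted basis the columns of $A$ pointing along $\ker A$ are forced to vanish and can simply be deleted, leaving a matrix $B$ with the same cokernel but with trivial kernel.

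First I would pin down the shape of $A_1$. Injectivity of $A_1$ (exactness at $\PP^l$) is preserved under the flat extension $\RR\otimes_\PP-$, giving an injection $\RR^l\hookrightarrow\RR^m$ and hence $l\le m$. Since $A_1$ is unimodular, so is its transpose $A_1^t\in U_{l,m}(\PP)$, and as $l\le m$ Theorem \ref{theor6.1} (Quillen--Suslin) applies to it, producing $W\in U_m(\PP)$ with $A_1^t W=(1_l,0)$. Transposing this relation gives $W^t A_1=\binom{1_l}{0}$, so that, setting $V=(W^t)^{-1}\in U_m(\PP)$, I obtain $A_1=V\binom{1_l}{0}$. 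In other words, after the invertible change of variables $V$ the inclusion of $\ker A$ is the standard one.

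Next I would form the equivalent matrix $AV$. Because $V$ is invertible, $\mathrm{Im}(AV)=\mathrm{Im}\,A$, whence $\cc(AV)\simeq\cc A$ and $AV\sim A$. The relation $A A_1=0$ now reads $AV\binom{1_l}{0}=0$, which says precisely that the first $l$ columns of $AV$ vanish; thus $AV=(0,B)$ for some $B\in\MM_{n,m-l}(\PP)$. Deleting the zero block leaves the image unchanged, so $\cc B=\cc(AV)\simeq\cc A$ and therefore $B\sim A$. Finally, $\ker(AV)=V^{-1}\ker A=V^{-1}\mathrm{Im}(A_1)=\mathrm{Im}\binom{1_l}{0}=\PP^l\oplus 0$; hence if $BZ=0$ then $\binom{0}{Z}\in\ker(AV)=\PP^l\oplus 0$, forcing $Z=0$, so $\ker B=0$, as required.

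The only substantive ingredient is the Quillen--Suslin theorem, invoked to realize the free submodule $\mathrm{Im}\,A_1=\ker A$ as a coordinate direct summand of $\PP^m$; everything after that is bookkeeping with block matrices. Consequently the main obstacle is conceptual rather than computational: one must recognize that the unimodularity built into the definition of a co-flat operator is exactly the hypothesis guaranteeing that $\ker A$ is a free direct summand of $\PP^m$, which is what permits it to be absorbed by a single unimodular column transformation. Without that unimodularity the kernel inclusion need not split, and the argument breaks down.
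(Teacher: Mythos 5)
Your argument is correct and is essentially the paper's own proof: the matrix $V$ you build by applying the Quillen--Suslin theorem to $A_1^t$ is exactly the invertible completion $U=(A_1,A_1^\vee)$ used in the paper, your $B$ is the paper's $AA_1^\vee$, and your computation $\ker(AV)=V^{-1}\mathrm{Im}\,A_1=\PP^l\oplus 0$ is just a rephrasing of the paper's contradiction argument for $\ker AA_1^\vee=0$. The paper additionally exhibits a second, larger equivalent matrix $\binom{X}{A}$ (with $X$ a left inverse of $A_1$) via an explicit chain homotopy --- the ``big'' pre-Lagrangian form used later --- but that is not needed to establish the statement as posed.
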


\begin{proof} We will present two equivalent matrices $B$ of different size. 
The first one comes from the diagram 
$$
    \xymatrixcolsep{5pc}\xymatrix{
    0\ar[r]& \PP^{l}\ar@/^/[r]^{A_1}& \PP^m\ar@/^/[r]^{A}\ar@/^/[l]^{h^A_1}\ar@/^/[d]^{G_1}& \PP^n\ar[r]^\pi\ar@/^/[l]^{h^A_0}\ar@/^/[d]^{G_{0}}& \cc\, A\\
    & 0\ar[r]& \PP^m\ar@/^/[r]^{B}\ar@/^/[u]^{F_1}& \PP^{l+n}\ar[r]^\pi\ar@/^/[l]^{h^B_{0}}\ar@/^/[u]^{F_{0}}& {\cc\, B}   }    
$$
whose homomorphisms are defined by the following matrices:
\begin{equation}
\begin{array}{llll}
    G_{0}=\binom{0}{\;1_n}\,,&F_{0}=(0, 1_n)\,,&h_0^A=0\,,&h_1^A=X\,,\\[3mm]
    G_1=1_m-A_1X\,,& F_1=1_m\,,&h_0^B=(A_1,0)\,,&B=\binom{X}{A}\,.\\
 \end{array}
\end{equation}
Here $X$ is a left inverse to $A_1$. One can easily check that the matrices satisfy Eqs. (\ref{FGh}), which   means $A\sim B$.

To construct an  alternative matrix $B$ we use the Quillen--Suslin theorem.  Let $A_1^\vee$ be a unimodular completion of $A_1$ such that the square matrix $U=(A_1, A_1^\vee)$ is invertible. Obviously, $\cc\, A=\cc\, AU$ and $A\sim AU$. On the other hand,  $AU=(0,AA_1^\vee)$ and $\mathrm{Im}\, AU=\mathrm{Im}\,AA_1^\vee$, where $AA_1^\vee: \PP^{m-l}\rightarrow \PP^n$. Therefore, $\cc\, AU=\cc\, AA_1^\vee$ and we conclude that  $A\sim AA_1^\vee$.  It remains to note that $\ker AA_1^\vee=0$. Otherwise, there would exist  a nonzero $X\in \MM_{m-l,1}(\PP)$ and $Y\in \MM_{l,1}(\PP)$ such that $A_1^\vee X=A_1Y$. But this contradicts the invertibility of $U=(A_1, A_1^\vee)$.
\end{proof}

The proposition above is a particular case of a more general statement. 

\begin{proposition}\label{p26}
    The projective dimension of the module $\cc\, A$ is equal to $N$ iff there exists a finite free resolution (\ref{ResA}) of length $N$ where the leftmost matrix $A_N$ is not unimodular.    
    \end{proposition}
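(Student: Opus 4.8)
The plan is to reduce the claim to the standard syzygy description of projective dimension and then, via the Quillen--Suslin theorem, to recognize that ``the leftmost matrix splits off'' is the same condition as ``the leftmost matrix is unimodular''. The heart of the argument is a single local criterion, valid for \emph{any} finite free resolution (\ref{ResA}) of length $N$: its leftmost matrix is unimodular if and only if $\cc\, A$ admits a free resolution of length at most $N-1$. Granting this criterion the proposition follows at once, since the mere existence of a length-$N$ resolution already gives $\mathrm{pd}(\cc\, A)\leq N$.

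To prove the criterion I would extract from (\ref{ResA}) the short exact sequence
$$
0\longrightarrow \PP^{m_N}\stackrel{A_{N-1}}{\longrightarrow}\PP^{m_{N-1}}\longrightarrow \Omega\longrightarrow 0\,,
$$
where $\Omega$ is the image of the arrow leaving $\PP^{m_{N-1}}$ to the right, i.e.\ the $(N-1)$-th syzygy module of $\cc\, A$ (with the convention $\Omega=\cc\, A$ for $N=1$). Exactness at $\PP^{m_N}$ makes $A_{N-1}$ injective, so this sequence splits exactly when $\Omega$ is projective, and a splitting is precisely a left inverse of $A_{N-1}$. Here the Quillen--Suslin theorem does the real work: every finitely generated projective $\PP$-module is free, so a split injection between free modules is exactly a unimodular matrix in the sense of Sec.~2.2, whence $A_{N-1}$ is unimodular iff $\Omega$ is projective. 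Finally, the standard homological fact---a consequence of Schanuel's lemma, and hence independent of the chosen resolution---is that the $(N-1)$-th syzygy $\Omega$ is projective iff $\mathrm{pd}(\cc\, A)\leq N-1$. Chaining these equivalences gives the criterion.

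The two directions of the stated equivalence are then immediate. If $\mathrm{pd}(\cc\, A)=N$, a minimal free resolution has length exactly $N$; were its leftmost matrix unimodular, the criterion would force $\mathrm{pd}(\cc\, A)\leq N-1$, contradicting minimality, so the leftmost matrix is not unimodular and the required resolution exists. Conversely, if some length-$N$ resolution has a non-unimodular leftmost matrix, the criterion rules out $\mathrm{pd}(\cc\, A)\leq N-1$, while the existence of the resolution gives $\mathrm{pd}(\cc\, A)\leq N$; together these yield $\mathrm{pd}(\cc\, A)=N$. I note in passing that Proposition~\ref{p25} is the opposite extreme of this dichotomy: there the leftmost matrix $A_1$ of a co-flat symbol \emph{is} unimodular, which is exactly why one can shorten the resolution and replace $A$ by an equivalent matrix with trivial kernel.

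The step I expect to be the main obstacle is the clean identification of $\Omega$ as an \emph{intrinsic} syzygy whose projectivity detects $\mathrm{pd}(\cc\, A)\leq N-1$: one must invoke Schanuel's lemma to ensure that projectivity of the $(N-1)$-th syzygy is independent of the particular resolution, and one must verify that the truncated sequence above is genuinely short exact at both ends (injectivity of $A_{N-1}$ as the leftmost arrow, surjectivity onto $\Omega$ by definition of the image). The accompanying translation between ``split injection'' and ``unimodular matrix'' is routine once the Quillen--Suslin theorem is in hand, as it converts the abstract projectivity statement into the concrete matrix condition appearing in the proposition.
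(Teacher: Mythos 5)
Your proof is correct and follows essentially the route the paper intends: the paper offers no proof of its own but cites Lam (Prop.\ 5.11), where the argument is exactly this syzygy/Schanuel reduction combined with Quillen--Suslin to identify finitely generated projectives with free modules. The only quibble is that Quillen--Suslin is really needed to upgrade ``$\Omega$ projective'' to ``$\Omega$ free'' (so that the truncated resolution is a \emph{free} resolution of length $N-1$, and so that a length-$N$ free resolution exists at all when $\mathrm{pd}=N$), whereas the identification of split injections between free modules with unimodular matrices holds over any ring.
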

For the proof, see \cite[Prop. 5.11]{book:Lam2}. In Refs. \cite{GAGOVARGAS} and \cite[Prop. 2]{Fabia},  an explicit construction was proposed to reduce the free resolution (\ref{ResA}) by one step in the case when $A_N$ is unimodular. All this makes the calculation of projective dimension a completely algorithmic  procedure. 

The projective dimension, being an  invariant of a module, provides a useful necessary condition  for a given operator $\hat{A}$ to be equivalent to a (pre-)Lagrangian one. 
\begin{proposition}\label{p27}
   If an operator $\hat A$ is equivalent to a gauge-free pre-Lagrangian operator,  then  the projective dimension of the module $\cc\, A$ is either zero or one. 
\end{proposition}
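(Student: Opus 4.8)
The plan is to transport the whole question to the gauge-free pre-Lagrangian representative and then write down an explicit free resolution of length one. By hypothesis there is a gauge-free pre-Lagrangian operator $\hat B$ with $\cc\, A\simeq \cc\, B$. Since the projective dimension is an invariant of the isomorphism class of a module, the projective dimensions of $\cc\, A$ and $\cc\, B$ coincide, so it suffices to bound that of $\cc\, B$. Thus the first step is simply to replace $A$ by $B$ and forget about $A$ entirely.

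Next I would use the remark already recorded in the text, that a gauge-free pre-Lagrangian operator has $\det B\neq 0$ (its symbol $B$ is a square $n\times n$ matrix, and $\ker B^{t}=0$ together with $\det B=\det B^{t}$ forces $\det B\neq 0$). The key observation is that over the integral domain $\PP$ this in turn forces $B\colon \PP^{n}\to\PP^{n}$ to be injective: passing to the field of fractions $\RR$ the matrix $B$ becomes invertible, so $\ker_{\RR}B=0$, and since $\PP\subset\RR$ we get $\ker B=0$ as well. Consequently the sequence
\[
0 \longrightarrow \PP^{n} \stackrel{B}{\longrightarrow} \PP^{n} \longrightarrow \cc\, B \longrightarrow 0
\]
is exact and is a free resolution of $\cc\, B$ of length one. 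By the definition of projective dimension as the minimal length of a finite free resolution, this yields $\mathrm{pd}(\cc\, B)\leq 1$, hence $\mathrm{pd}(\cc\, A)\in\{0,1\}$, which is exactly the assertion.

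There is essentially no hard part here; the whole content is the pair of elementary facts that a square gauge-free symbol is injective as a $\PP$-module map, and that an injective square presentation is automatically a length-one resolution. The only place meriting a word of care — the nearest thing to an obstacle — is the implication $\det B\neq 0\Rightarrow \ker B=0$, which is precisely where the domain structure of $\PP$ (equivalently, its embedding into $\RR$) enters, and which explains why the hypothesis genuinely needs the operator to be simultaneously square (pre-Lagrangian) and gauge-free rather than only one of the two.
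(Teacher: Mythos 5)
Your argument is correct and follows essentially the same route as the paper's own proof: pass to the gauge-free pre-Lagrangian representative $B$, note that $\det B\neq 0$ forces $\ker B=0$ over the integral domain $\PP$, and read off the length-one free resolution $0\rightarrow \PP^n\stackrel{B}{\rightarrow}\PP^n\rightarrow\cc\,B\rightarrow 0$, then invoke invariance of projective dimension under $\cc\,A\simeq\cc\,B$. The only difference is that you spell out the elementary implications $\ker B^t=0\Rightarrow\det B\neq 0\Rightarrow\ker B=0$ that the paper leaves implicit.
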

\begin{proof}
 Every gauge-free pre-Lagrangian operator  $\hat{B}\in \OO_{n,n}(\FF)$ is determined by a nondegenerate square matrix $B\in\MM_{n,n}(\PP)$. This gives  the short exact sequence $0\rightarrow \PP^n\stackrel{B}{\rightarrow}\PP^n\rightarrow\cc\,B\rightarrow 0$, meaning that the projective dimension of $\cc\, B$ is less than or equal to one. The zero projective dimension implies that the matrices $B$ and $A$ are unimodular and the differential equation $\hat Af=0$ has only the trivial solution $f=0$.  If $A\sim B$, the corresponding cokernels should have the same projective dimension. 
\end{proof}

Combining Propositions \ref{p24} and \ref{p25}, we conclude that every co-flat gauge-free matrix $A$ is equivalent to a square nondegenerate matrix $B$. In the course of the proof of Prop. \ref{p25}, two such matrices $B$ were actually constructed: one given by $\binom{X}{A}$, where $X$ is the left inverse to $A_1$, and the other given by the product $AA_1^\vee$. The former has size $m\times m$, while the latter is of dimension $n\times n$. Since $m>n$, we will refer  to the corresponding differential operators as the {\it big} and {\it small} pre-Lagrangian forms of the operator $\hat A$.

\section{Free massive fields of integer spin}
In this section,  we exemplify the above constructions by relativistic wave equations for free massive particles. As a by-product, we  get new free Lagrangians with fewer auxiliary fields. 

\subsection{Pre-Lagrangian equations for free massive fields}
Our starting point is the system of relativistic wave equations of the form 
\begin{equation}
    \label{rwe}
        T^1_{\mu_1\cdots \mu_s}:=(\square - m^2)\phi_{\mu_1\ldots\mu_s}(x) = 0\,,\qquad
         T^2_{\mu_1\cdots \mu_{s-1}}:=\partial^{\mu_s}\phi_{\mu_1\ldots\mu_{s}}(x) = 0\,.
\end{equation}
Here, $\phi_{\mu_1\ldots\mu_s}(x)$ is a fully symmetric and traceless tensor field in $d$-dimensional Minkowski space with Cartesian coordinates $x^\mu$ and all indices are raised and lowered with the (mostly plus) Minkowski metric.  Equations (\ref{rwe}) are known to be gauge-free and  describe the dynamics of free quantum particles of mass $m$. The rank of the tensor $s$ defines the spin of a particle\footnote{Of course, the relativistic wave equations (\ref{rwe}) do not cover all free massive particles for $d>4$.}. Scalar particles correspond to $s=0$, in which case the second equation in (\ref{rwe}) is absent. For $s>0$, the system is non-Lagrangian since the equations outnumber the fields.  The symbol of the operator determining Eqs. (\ref{rwe})  reads
\begin{equation}\label{symbA}
    A=\big(A^{\nu_1\cdots \nu_s}_{\mu_1\cdots\mu_s}, \, A^{\nu_1\cdots \nu_{s}}_{\mu_1\cdots \mu_{s-1}}\big)=\big( (p^2-m^2)\delta_{\mu_1}^{\nu_1}\cdots \delta_{\mu_s}^{\nu_s}, \,\delta^{\nu_1}_{\mu_1}\cdots \delta_{\mu_{s-1}}^{\nu_{s-1}}p^{\nu_s}\big)_{st}\,.
\end{equation}
The subscript $st$ stands for the symmetric and traceless projection over the upper and lower indices of the component tensors. 
The equations satisfy the obvious identity
\begin{equation}\label{id}
   \partial^{\mu_s}T^1_{\mu_1\cdots \mu_{s-1}\mu_s}-(\square -m^2)T^2_{\mu_1\cdots \mu_{s-1}}\equiv 0\,,
\end{equation}
from which one can read off the symbol $A_1$ of the compatibility operator for (\ref{rwe}). Explicitly, 
\begin{equation}\label{A1}
    A_1=\binom{\delta_{\lambda_1}^{ \mu_1}\cdots \delta_{\lambda_{s-1}}^{\mu_{s-1}}p^{\mu_s }}
    {-\delta^{\mu_1}_{\lambda_1}\cdots\delta^{\mu_{s-1}}_{\lambda_{s-1}}(p^2-m^2)}_{st}\,.
\end{equation}
It turns out that the matrix $A_1$ is unimodular or, what is the same, the matrix $A$ is co-flat. As explained in the previous section, the last fact implies that the PDE system  (\ref{rwe}) is equivalent to a pre-Lagrangian one. To construct the {\it big} pre-Lagrangian form one needs a left inverse matrix to the matrix $A_1$. In view of the tensor structure of the equations of motion (\ref{rwe}) and the identities (\ref{id}), the left inverse matrix has the block form $X=({X}_{\mu_1\cdots \mu_s}^{\nu_1\cdots \nu_{s-1}},{X}_{\mu_1\cdots \mu_{s-1}}^{\nu_1\cdots \nu_{s-1}})$, the blocks being  fully symmetric and traceless tensors in upper and lower indices. Once the matrix $X$ is identified, one can write the pre-Lagrangian 
equations as
\begin{equation}\label{erwe}
     {T}^1_{\mu_1\cdots \mu_s}+\hat{X}_{\mu_1\cdots \mu_s}^{\nu_1\cdots \nu_{s-1}}\psi_{\nu_1\cdots \nu_{s-1}}=0\,,  \qquad
         T^2_{\mu_1\cdots \mu_{s-1}}+\hat X_{\mu_1\cdots \mu_{s-1}}^{\nu_1\cdots \nu_{s-1}}\psi_{\nu_1\cdots \nu_{s-1}}= 0\,,
         \end{equation}
where $\psi_{\nu_1\cdots \nu_{s-1}}$ is a symmetric and traceless  tensor field of rank $s-1$. By construction, the total number of the initial fields $\phi$'s and the auxiliary fields $\psi$'s  is equal to the number of equations (\ref{erwe}). On substituting the extended equations (\ref{erwe}) in the identity (\ref{id}), we get the differential consequence 
\begin{equation}\label{dc}
\big(\partial^{\mu_s}\hat{X}_{\mu_1\cdots \mu_s}^{\nu_1\cdots \nu_{s-1}} - (\square -m^2)\hat X_{\mu_1\cdots \mu_{s-1}}^{\nu_1\cdots \nu_{s-1}}\big)\psi_{\nu_1\cdots \nu_{s-1}}=0\,,
\end{equation}
which involves the auxiliary fields alone. If the matrix $X$ has been chosen to satisfy $XA_1=1$, then the differential operator in (\ref{dc}) defines the identity map, which means that $\psi_{\nu_1\cdots \nu_{s-1}}=0$ and the extended equations (\ref{erwe}) are fully equivalent to the original ones (\ref{rwe}).  It should be noted that the left inverse matrix $X$ is not unique. In the case under consideration, it is natural to require the operator $\hat X$ to be Lorentz invariant. The most general Lorentz-invariant Ansatz for $\hat X$ reads
\begin{equation}\label{XX}
    \begin{array}{rcl}
\hat{X}_{\mu_1\cdots \mu_s}^{\nu_1\cdots \nu_{s-1}}\psi_{\nu_1\cdots \nu_{s-1}}&=& A_1\partial_{\{\mu_1}\psi_{\mu_2\cdots\mu_s\}} + A_2\partial_{\{\mu_1\mu_2}(\partial\psi)_{\mu_3\cdots\mu_s\}} + \ldots + A_s\partial_{\{\mu_1\cdots\mu_s\}}\partial^{s-1}\psi\,,\\[3mm]
\hat X_{\mu_1\cdots \mu_{s-1}}^{\nu_1\cdots \nu_{s-1}}\psi_{\nu_1\cdots \nu_{s-1}}&=& B_1\psi_{\mu_1\cdots\mu_{s-1}} + B_2\partial_{\{\mu_1}(\partial\psi)_{\mu_2\ldots\mu_{s-1}\}} + \ldots + B_s\partial_{\{\mu_1\cdots\mu_{s-1}\}}\partial^{s-1}\psi \,.
    \end{array}
\end{equation}
Here we introduced the following shorthand notation for repeated gradients and divergences: 
\begin{equation}\label{gdiv}
\partial_{\mu_1\cdots \mu_n}(\partial^l\Phi)_{\nu_{l+1}\cdots \nu_m}:=\partial_{\mu_1}\cdots\partial_{\mu_n}\partial^{\nu_1}\cdots \partial^{\nu_l}\Phi_{\nu_1\cdots \nu_m}\,.
\end{equation}
The braces stand for symmetrization and the traceless part of the enclosed indices, see (\ref{br}). Finally, the coefficients $A$'s and $B$'s are supposed to be polynomials in the D'Alembert operator $\square$. Substituting the Ansatz (\ref{XX}) in Eq. (\ref{dc}) and requiring the wave operator to be unity, we get a set of equations for determining the $A$'s and $B$'s. The details of the computation can be found in Appendix \ref{AppA}. The final answer is 
\begin{equation}\label{AB}
    \begin{array}{ll}
  \displaystyle   A_1=\frac1{m^2}, &\displaystyle B_1 = \frac{1}{m^2}, \\ [5mm]
  \displaystyle A_n = \frac{1}{m^2}\left(-\frac{1}{m^2}\frac{2s+d-6}{2s+d-5}\right)^{n-1}, &  \displaystyle B_n = \frac{1}{m^2}\frac{2s+d-5}{2s+d-4}\left(-\frac{1}{m^2}\frac{2s+d-6}{2s+d-5}\right)^{n-1},\\ [5mm]
  \displaystyle  A_s = -\frac{1}{m^4}\frac{d-2}{s+d-3}\left(-\frac{1}{m^2}\frac{2s+d-6}{2s+d-5}\right)^{s-2}, &  \displaystyle B_s = -\frac{1}{m^4}\frac{d-2}{2s+d-4}\left(-\frac{1}{m^2}\frac{2s+d-6}{2s+d-5}\right)^{s-2},
    \end{array}
\end{equation}
where $1<n<s$. As is seen, all  coefficients can be chosen constant. Noteworthy also are the two special cases $(s=1)$ and $(s=2, d=2)$, where only the leading coefficients $A_1=B_1=1/m^2$ are different from zero. In these cases, one can exclude the auxiliary field $\psi$ from the extended equations (\ref{erwe}) to get Lagrangian equations in terms of the original field $\phi$:
\begin{equation}\label{proca}
(\square -m^2)\phi_\mu-\partial_\mu\partial^\nu\phi_\nu=0\,,\qquad (\square -m^2)\phi_{\mu\nu}-\partial_\mu \partial^\lambda\phi_{\lambda\nu}-\partial_\nu \partial^\lambda\phi_{\lambda\mu}+\eta_{\mu\nu}\partial^\lambda\partial^\gamma\phi_{\lambda\gamma}=0\,.
\end{equation}
The first equation is known as the Proca equation for the massive vector field. As for the second equation, passing to the light-cone coordinates one can easily see that the wave operator is unimodular and the equation has only zero solution. 

One can regard the relativistic wave equations (\ref{proca}) as the {\it small} pre-Lagrangian form of the corresponding equations (\ref{rwe}). What about the small pre-Lagrangian form for the other values of $s$ and $d$? Although the existence of such a form was proved in Proposition \ref{p25}, its Lorentz invariance is not ensured. Counterintuitive as the last statement might seem\footnote{It implies that one can describe unitary irreducible representations of the Poincar\'e group by means of PDEs that are not Lorentz invariant!}, it is supported by specific examples. Consider, for instance, the case $s=2$, $d=3$. The small pre-Lagrangian form is constructed by means of the unimodular completion of the matrix $A_1$ defined by (\ref{A1}). In the case at hand, $A_1$ is an $8\times 3$ matrix, which unimodular completion  $A_1^{\vee}$ must be a matrix of size $8\times 5$. The most general Lorentz invariant Ansatz for the matrix $A_1^\vee$ is given by
\begin{equation}
    A_1^\vee=\binom{(A_1^\vee)^{\nu_1\nu_2}_{\mu_1\mu_2}}{ (A_1^\vee)^{\nu}_{\mu_1\mu_2} }=\binom{a_1\delta^{\nu_1}_{\mu_1}\delta^{\nu_2}_{\mu_2}+2a_2\delta^{\nu_1}_{\mu_1}p^{\nu_2}p_{\mu_2}+a_3p^{\nu_1}p^{\nu_2}p_{\mu_1}p_{\mu_2}}{2a_4 \delta^\nu_{\mu_1}p_{\mu_2}+a_5p_{\mu_1}p_{\mu_2}p^\nu}_{st}\,,
\end{equation}
where the coefficients $a_1,\ldots, a_5$ are polynomials in $p^2$. Computing the determinant of the $8\times 8$ matrix $(A_1, A^\vee_1)$ with \textsc{Wolfram Mathematica}, we find
\begin{equation}\label{dett}
    \det (A_1, A^\vee_1)=\frac13a_1^2\Delta^2_1(p^2)\Delta_2(p^2)\,,
    \end{equation}
    where
    \begin{equation}
    \begin{array}{l}
        \Delta_1(t)= a_1(t-m^2) + t(a_2(t-m^2) + a_4)\,,\\[3mm] 
        \Delta_2(t)=3a_1 (t-m^2) + 2t(2a_2(t-m^2) + t(a_3(t - m^2) + a_5) + 2a_4)  \,.
        \end{array}  
        \end{equation}
Each factor in (\ref{dett}) must be a nonzero constant. Evaluating the derivatives of $\Delta$'s at zero, we get
\begin{equation}
    \Delta_1'(0)=a_4-m^2a_2+a_1=0\,,\qquad \Delta'_2(0)=4(a_4-m^2a_2)+3a_1=0\,.
    \end{equation}
Therefore
\begin{equation}
    4\Delta_1'(0)-\Delta_2'(0)=a_1\neq 0
\end{equation}
and we arrive at a contradiction.  

Despite this somewhat disappointing result, the Quillen--Suslin theorem still guarantees the existence of a 
(noninvariant) unimodular completion  for the matrix $A_1$ above. Applying the package \textsc{QuillenSuslin} \cite{Fab} in the Computer Algebra System MAPLE yields the answer for $A_1^\vee$ that takes about 30 pages and involves polynomials of degree $20$. It should be noted, however, that a unimodular completion is not unique and the output largely depends on the particular algorithm implemented in the package. Whether there is a more compact expression for $A_1^\vee$ is not yet clear.     

\begin{remark}
    That equations without explicit Lorentz invariance can nevertheless describe ir\-reducible representations of  relativistic symmetry groups is nothing new. The self-dual vector field in $4$-dimensional Euclidean space provides the simplest example. The self-duality equations read
\begin{equation}\label{sd}
   T_{\mu\nu}:= F_{\mu\nu}-\frac12\varepsilon_{\mu\nu\alpha\beta}F^{\alpha\beta}=0\,,
\end{equation}
where 
    $
        F_{\mu\nu}=\partial_\mu\phi_\nu-\partial_\nu\phi_\mu 
    $
is the gauge invariant strength tensor of the vector-potential $\phi_\mu(x)$ and  $\varepsilon_{\mu\nu\alpha\beta}$ is the Levi-Civita symbol.  All indices are raised and lowered with the Euclidean metric $\delta_{\mu\nu}$. The field equations (\ref{sd}) are explicitly invariant under the action of the isometries of $ISO(4)$ and constitute an anti-self-dual tensor.  Anti-self-duality implies that only $3$ out of $6$ components of the anti-symmetric tensor $T_{\mu\nu}$ are algebraically  independent. Unfortunately, there is no $SO(4)$ invariant way to extract these three independent equations. A possible way to get rid of some redundant equations is as follows. Let us  fix a constant  $4$-vector $\xi^\mu$ of unit length, $\xi^2=1$.  The contraction of  $\xi^\mu$ with (\ref{sd}) gives the system of four equations 
\begin{equation}\label{sdx}
    T^\xi_\mu:=\xi^\nu T_{\nu\mu}=0\,.
\end{equation}
Of these equations, only three are linearly independent because of the identity $\xi^\mu T^\xi_{\mu}\equiv 0$.  
Another easily verified identity,
\begin{equation}
T_{\mu\nu}=\xi_\mu T^\xi_\nu-\xi_\nu T^\xi_\mu-\frac12\varepsilon_{\mu\nu}{}^{\alpha\beta} (\xi_\alpha T^\xi_\beta-\xi_\beta T^\xi_\alpha)\,,
\end{equation}
 shows that Eqs. (\ref{sd}) and (\ref{sdx}) are completely equivalent to each other and have the same solutions.
In other words, the solution space of equation (\ref{sdx}) is invariant under the action of $ISO(4)$, even though the presence of a given $4$-vector $\xi$ violates the explicit $SO(4)$ invariance. 
 \end{remark}

We conclude this subsection with a brief comment on the massless limit of equations (\ref{rwe}). Although the value $m=0$ leads to an unsatisfactory physical theory, it can be used to illustrate some mathematical concepts of the previous section. In that case, the matrix of the compatibility operator (\ref{A1}), being of full rank, is not unimodular any more. The reason is simple:  it vanishes at $p=0$. Therefore, the resulting system is not co-flat. The corresponding symbol 
matrices $A$ and $A_1$ still define a free resolution of the module $\mathrm{coker}\,A$. By Proposition \ref{p26} this module has projective dimension two.  
Then it follows from Proposition \ref{p27} that any Lagrangian for massless equations (\ref{rwe}), if it exists, must necessarily involve gauge fields. 

\subsection{New Lagrangians for free massive fields}
Having brought the relativistic wave equations (\ref{rwe}) into  a pre-Lagrangian form, one may wonder about their  Lagrangian formulation. Notice that the solution (\ref{XX}, \ref{AB}) does not give a Lagrangian system of equation, since the operator of the extended system (\ref{erwe}) is not formally self-adjoint. It should be 
remembered that the matrix $X$ -- a left inverse to $A_1$ -- is not unique.  Furthermore, any relativistic wave operator is defined up to a unimodular factor.  Given this ambiguity, we start with the  most general Lorentz invariant Ansatz for the Lagrangian:
\begin{equation}\label{L}
    \mathcal{L} = \sum_{n=0}^s(-1)^n (\partial^n\phi) \cdot A_n(\partial^n\phi) + \sum_{n=0}^{s-1}(-1)^n (\partial^n\psi) \cdot B_n(\partial^n\psi)+\sum_{n=1}^s(-1)^n(\partial^n\phi)\cdot C_n(\partial^{n-1}\psi)\,.    
    \end{equation}
Here the dots stand for the full contraction of tensor indices with the help of the Minkowski metric and the coefficients $A_n$, $B_n$, and $C_n$ are polynomials in $\square$ to be determined.
    The corresponding Euler--Lagrange equations read 
\begin{equation}\label{ELE}
    \begin{array}{l}
       \displaystyle A_0\phi_{\mu_1\ldots\mu_s} + \sum\limits_{n = 1}^s\Big(A_n\partial_{\{\mu_1\cdots\mu_n}(\partial^n\phi)_{\mu_{n+1}\cdots\mu_s\}} + \frac{1}{2}C_n\partial_{\{\mu_1\cdots\mu_n}(\partial^{n-1}\psi)_{\mu_{n+1}\cdots\mu_s\}}\Big) = 0\,,\\[5mm]
      \displaystyle  B_0\psi_{\mu_1\ldots\mu_{s-1}} + \sum\limits_{n = 1}^{s-1}B_n\partial_{\{\mu_1\cdots\mu_n}(\partial^n\psi)_{\mu_{n+1}\ldots\mu_{s-1}\}}
      - \sum\limits_{n = 1}^{s}\frac{1}{2}C_n\partial_{\{\mu_1\cdots\mu_{n-1}}(\partial^{n}\phi)_{\mu_{n}\cdots\mu_{s-1}\}} = 0\,.
    \end{array}
\end{equation}
Our strategy  is to fix the free coefficients by requiring the field equations 
\begin{equation}
(\square-m^2)\phi_{\mu_1\cdots\mu_s}=0\,,\qquad \partial^{\mu_1}\phi_{\mu_1\cdots\mu_s}=0\,,\qquad \psi_{\mu_1\cdots\mu_{s-1}}=0\,.
\end{equation}
to be among the differential consequences of (\ref{ELE}). As explained in  Appendix \ref{AppB}, it is enough to consider only successive divergences of equations (\ref{ELE}). This leads to a system of quadratic equations on the coefficients. Below we present some explicit solutions for lower spins. In all the cases $A_0 = \square - m^2$ and the remaining coefficients are constants. 

\paragraph{ Spin $2$:}
\begin{equation}\label{s2}
    \begin{array}{lll}
     \displaystyle
         A_1 =-\frac{d}{2(d-1)}\,,& \displaystyle B_0 = -\frac{(d-1)}{2(d-2)}C_1^2\,,\quad& \\[5mm]
 \displaystyle A_2 = 0\,,\quad& \displaystyle B_1 = \frac{d-1}{2  m^2 d}C_1^2\,,&C_2 = 0\,.\\
    \end{array}
\end{equation}

\paragraph{ Spin $3$:}

\begin{equation}\label{s3-1}
    \begin{array}{lll}
 \displaystyle A_1 = \frac{d^4-6 d^2-2 d+4}{2 d^2+2 d}\,,& \displaystyle B_0 = -\frac{m^4(d+1)}{2d(d-2)^2}C_2^2\,,&         \\[5mm]
 \displaystyle A_2 = -\frac{(d-2)^2 (d+2)}{2 m^2 (d+1) }\,,\quad& \displaystyle B_1 = \frac{m^2(d+1)}{2(d+2)(d-2)^2}C_2^2\,,\quad& 
 \displaystyle C_1 = -\frac{m^2(d^2-2)}{d(d-2)}C_2\,,\\[5mm]
 \displaystyle A_3 = \frac{(d-2)^3 (d+2)}{2 m^4(d+1)^2 }\,,& \displaystyle B_2 = -\frac{1}{2(d+2)(d-2)}C_2^2\,,& 
 \displaystyle C_3 = -\frac{(d-2)d^2}{m^2d^2(d+1)}C_2\,,
    \end{array}
    \end{equation}
   or
   \begin{equation}\label{s3-2}
    \begin{array}{lll}
 \displaystyle A_1 = -\frac{11 d^4+12 d^3+8 d^2+24 d-64}{12 d (d+1) \left(d^2+2\right)}\,,& \displaystyle B_0 = -\frac{3m^4(d+1)(d^2+2)}{4d(d-2)^2}C_2^2\,,&     \\[5mm]
 \displaystyle A_2 = -\frac{(d-2)^2 (d+2)}{3 m^2 (d+1) (d^2+2) }\,,& \displaystyle B_1 = \frac{3m^2(d+1)(d^2+2)}{4(d-2)^2(d+2)}C_2^2\,,& \displaystyle C_1 = \frac{m^2(d^2+8) }{2 (d-2) d}C_2\,,\\[5mm]
 \displaystyle A_3 = \frac{3 (d-2)^3 (d+2)}{4 m^4 (d+1)^2 (d^2+2) }\,,& 
 \displaystyle B_2 = -\frac{3(d^2+2)}{4(d-2)(d+2)}C_2^2\,,& \displaystyle C_3 = -\frac{3 (d-2)}{2 m^2(d+1) }C_2\,.
    \end{array}
\end{equation}
\paragraph{ Spin $4$:}
\begin{equation}\label{s4}
    \begin{array}{ll}
 \displaystyle A_1 = \frac{d^4-40 d^2-104 d-32}{4 (d^3+9 d^2+22 d+12)}\,, &
 \displaystyle B_0 = -\frac{m^4 (d^3+9 d^2+22 d+12) }{4 d^2 (d+2)^2}C_2^2\,,
         \\[5mm]
 \displaystyle A_2 = -\frac{d^2 (d^2+6 d+8)}{m^2(d^3+9 d^2+22 d+12) }\,, & 
 \displaystyle B_1 = \frac{m^2(d^3+9 d^2+22 d+12)}{4 d^2 (d+2) (d+4)}C_2^2\,, \\[5mm]
         A_3 = A_4 = 0\,,&  B_2 = B_3 = 0\,, \\[5mm]
\displaystyle C_1 = \frac{m^2(d^2+2 d-4) }{2 d (d+2)}C_2\,, & C_3 = C_4 = 0\,.
    \end{array}
\end{equation}
The coefficients $C_1$ in (\ref{s2}) and $C_2$ in (\ref{s3-1} -- \ref{s4}) are arbitrary nonzero constants. It should be noted that the solutions above are by no means unique. For instance, one can change the coefficients by merely rescaling the fields: $\phi\rightarrow \alpha\phi$, $\psi\rightarrow \beta\psi$. 
In \cite{PhysRevD.18.4548}, O'Brien  considered the special case $s=3,d=4$.  Requiring that the Euler--Lagrange equations be of order $4$, he found another solution wherein some of the coefficients, besides $A_0$, are nonconstant polynomials in $\square$. Numerical experiments using \textsc{Wolfram Mathematica} indicate that beyond spin $4$ there are no solutions in which all coefficients except $A_0$ are constant.

\subsection{Comparison with the Fierz--Pauli theory}

Let us consider in more detail the case $s=2$ and $d=4$, studied originally by Fierz and Pauli \cite{Fierz:1939ix}.  The difference between our model
and that of Ref. \cite{Fierz:1939ix} is in the auxiliary fields: instead of a single scalar field $\chi$ used by Fierz and Pauli, we introduce the vector field $\psi_\mu$. In order to systematically relate both models, we augment system (\ref{rwe}) for $s=2$ with the divergence of the second equation. This gives the equivalent system of equations
\begin{equation}
    \label{TTT}
        T_{\mu \nu}:=(\square - m^2)\phi_{\mu \nu} = 0\,, \qquad  T_{\nu}:= \partial^\mu\phi_{\mu\nu} = 0\,, \qquad T:=\partial^\mu\partial^\nu\phi_{\mu\nu}= 0\,.
\end{equation}
The augmented system enjoys the obvious identities:  
\begin{equation}\label{TT}
        \partial^{\mu}T_{\mu \nu} - (\square - m^2)T_{\nu}\equiv 0\,, \qquad \partial^{\nu}T_{\nu} - T\equiv 0\,.
\end{equation}
It is easy to see that the augmented system (\ref{TTT}) is co-flat. By Proposition \ref{p25} it admits a pre-Lagrangian form.  The tensor structure of the identities (\ref{TT}) suggests that the big pre-Lagrangian form of (\ref{TTT}) involves one vector field $\psi_{\nu}$ and one scalar field $\chi$. One can also think of $\chi$ as a generalized Stueckelberg field in the sense of Ref. \cite{abakumova2023dualisation}. 
We make the following choice for the extended system: 
\begin{equation}\label{FPee}
   \begin{array}{l}
        (\square - m^2)\phi_{\mu\nu} + A(\partial_{\mu}\psi_{\nu} + \partial_{\nu}\psi_{\mu} - \frac{1}{2}\eta_{\mu\nu}\partial^\lambda\psi_\lambda) \\[3mm]
        + B(\partial_{\mu}\partial_{\nu}\partial\psi - \frac{1}{4}\eta_{\mu\nu}\square\partial^\lambda\psi_\lambda) + C(\partial_{\mu}\partial_{\nu}\chi - \frac{1}{4}\eta_{\mu\nu}\square\chi) = 0\,, \\[3mm]
        \partial^\mu\phi_{\mu\nu} + D \psi_{\nu} + E\partial_{\nu}\partial^\mu\psi_\mu + F \partial_{\nu}\chi = 0\,, \\[3mm]
        \partial^{\mu}\partial^{\nu}\phi_{\mu\nu} + G\partial^\mu\psi_\mu + H\chi = 0\,.
    \end{array}
\end{equation}
Again, the scalar coefficients $A,B,\ldots, H$ are some polynomials in $\square$ to be determined from the requirement that the extended system be equivalent to the original one. On substituting the extended equations (\ref{FPee}) into the identities (\ref{TT}), we obtain the system for the auxiliary fields alone:
\begin{equation}\label{eq2}
    \begin{array}{l}
        \big((A- D)\square + m^2 D\big)\psi_{\nu} + \big(\frac{1}{2}A + \frac{3}{4}B\square - E(\square - m^2) \big)\partial_{\nu}\partial^\mu\psi_\mu + \big(\frac{3}{4}C\square - F\square +m^2 F\big)\partial_{\nu}\chi = 0\,, \\[3mm]
        (D + E\square - G)\partial^\mu\psi_\mu + (F\square - H)\chi = 0\,.
    \end{array}
\end{equation}
Combining the second equation with the divergence of the first gives the pair of scalar equations
\begin{equation}\label{s2sc}
    \begin{array}{l}
        \Big(m^2D + \big(\frac{3}{2}A - D + m^2E\big)\square + \big(\frac{3}{4}B - E\big)\square^2\Big)\partial^\mu\psi_\mu + \big(\frac{3}{4}C\square - F\square +m^2 F\big)\square\chi = 0\,,\\[3mm]
        (D + E\square - G)\partial^\mu\psi_\mu + (F\square - H)\chi = 0\,.
    \end{array}
\end{equation}
At this point, we make a simplifying assumption that $E=0$ and $H = H_1 + H_2 \square$  where $H_1$, $H_2$ and all other coefficients  $A,\ldots, G$ are constant. If we now want the linear  equations (\ref{s2sc}) to have only a trivial solution, $\partial^\mu\psi_\mu=\chi=0$, the operator determinant of the system 
\begin{equation}
    \begin{array}{l}
         \mathrm{Det} = -m^2DH_1 +\left(DH_1 - m^2DH_2 + m^2FG - \frac{3}{2}AH_1\right)\square\\[3mm]
         +\left(\frac{3}{2}AF + \frac{3}{4}CG - \frac{3}{4}CD - FG - \frac{3}{4}BH_1 + DH_2 - \frac{3}{2}AH_2\right)\square^2 + \frac{3}{4}B\left(F - H_2\right)\square^3 \,,
    \end{array}
\end{equation}
must be a nonzero constant. Then the vector equation in (\ref{eq2}) simplifies to 
\begin{equation}
    \left((A- D)\square + m^2 D\right)\psi_{\nu} = 0
\end{equation}
It gives the desired equation $\psi_\nu=0$ provided $A = D\neq 0$. This leads to the following system of equations:
\begin{equation}
    \begin{array}{l}
        A = D\,,\quad 
        F = H_2\,, \quad       DH_1 - m^2DH_2 + m^2FG - \frac{3}{2}AH_1 = 0\,,\\[3mm]
        \frac{3}{2}AF + \frac{3}{4}CG - \frac{3}{4}CD - FG - \frac{3}{4}BH_1 + DH_2 - \frac{3}{2}AH_2 = 0\,.
    \end{array}
\end{equation}
We can satisfy this system by setting
\begin{equation}
        A = D = \frac{m}{2}\,,\quad 
        B = -\frac{1}{3m}(C + 1)\,,\quad
        F = H_2 = -\frac{3}{4}\,,\quad
        G = m\,,\quad
        H_1 = -\frac{3m^2}{2}\,,
\end{equation}
with $C$ being an arbitrary constant. Then the field equations (\ref{FPee}) take the form
\begin{equation}
    \begin{array}{l}
        (\square - m^2)\phi_{\mu\nu} + \frac{m}{2}(\partial_{\mu}\psi_{\nu} + \partial_{\nu}\psi_{\mu} - \frac{1}{2}\eta_{\mu\nu}\partial^\lambda\psi_\lambda) \\[3mm]
        - \frac{1}{3m}(C + 1)(\partial_{\mu}\partial_{\nu}\partial\psi - \frac{1}{4}\eta_{\mu\nu}\square\partial^\lambda\psi_\lambda) + C(\partial_{\mu}\partial_{\nu}\chi - \frac{1}{4}\eta_{\mu\nu}\square\chi) = 0\,, \\[3mm]
        \partial^\mu\phi_{\mu\nu} + \frac{m}{2}\psi_{\nu} -\frac{3}{4}\partial_{\nu}\chi = 0\,, \\[3mm]
        \partial^{\mu}\partial^{\nu}\phi_{\mu\nu} + m\partial^\mu\psi_\mu - \frac{3}{4}(\square + 2m^2)\chi = 0\,.
    \end{array}
\end{equation}
Solving the second equation for $\psi_\nu$, we get
\begin{equation}
    \psi_{\nu} = \frac{3}{2m}\partial_{\nu}\chi - \frac{2}{m}\partial^\mu\phi_{\mu\nu}\,.
\end{equation}
On substituting this $\psi_\nu$ into the remaining two equations, we arrive at the PDE system 
\begin{equation}
    \begin{array}{l}
    (\square - m^2)\phi_{\mu\nu} - \big(\partial_{\mu}(\partial\phi)_{\nu} + \partial_{\nu}(\partial\phi)_{\mu} - \frac{1}{2}\eta_{\mu\nu}\partial^2\phi\big) + \frac{1}{2}\big(\partial_{\mu}\partial_{\nu}\chi - \frac{1}{4}\eta_{\mu\nu}\square\chi\big) = 0\,,\\[3mm]
         \partial^\mu\partial^\nu\phi_{\mu\nu} - \frac{3}{4}(\square - 2m^2)\chi = 0\,,
    \end{array}
\end{equation}
which is identical to the Fierz--Pauli equations for the massive spin-2 field. As is seen, the free parameter $C$ drops from the final expressions. For other Lagrangian formulations of the spin-2 field,  see \cite{PhysRev.148.1259, PhysRevD.6.984, abakumova2023dualisation, GRIGORIEV2022115686}. 

\section{Conclusion}

Let us summarize the main result of our paper in the following thesis:   {\it A system of PDEs with constant coefficients is equivalent to a 
pre-Lagrangian system without gauge symmetry if and only if the dual solution module has projective dimension one.} Of course, the absence of a pre-Lagrangian formulation implies the absence of a Lagrangian one. The relativistic wave equations (\ref{KG}) exemplify both possibilities depending on whether $m^2>0$ or $m^2=0$. 

There are several natural directions for further generalization and application of our results. 
In Sec. 3, we restricted our consideration to fully symmetric tensor fields in $d$-dimensional Minkowski space. 
For $d>4$, such fields do not  exhaust all unitary irreducible representations of the Poincar\'e group. 
Therefore, it is interesting to apply the above analysis to mixed-symmetry fields. 
This does not require any modifications to the method. Another possible direction would be an extension of the above homological analysis to 
gauge invariant systems of PDEs. Even if the original PDE system enjoys no gauge symmetry it may well be equivalent to a gauge invariant (pre-)Lagrangian system.  Specific examples include frame-like  formulations of massive fields \cite{Z1, PV, Z2}. 
To cover the case of gauge invariant systems  one needs the concept of `two-side resolutions' of wave operators, introduced and studied in \cite{KLSh1, KLSh2}. Unfortunately, such resolutions have not yet received much attention in the mathematical literature. Here, we can only mention the paper \cite{AN}, which deals 
with two-side resolutions under very restrictive assumptions, meaning no physical degrees of freedom. Finally, to consider more general backgrounds than Minkowski space, we need to generalize our homological analysis  to differential operators with non-constant coefficients. This brings us immediately to the realm of D-modules and their resolutions. Although the analysis becomes more complex, it is still doable  with modern computer algebra systems.

\section*{Acknowledgements}

The work of A. Sh. was partially supported by the São Paulo Research Foundation (FAPESP), grant 2022/13596-8. D. Sh. acknowledges the financial support from the Foundation for the Advancement of Theoretical Physics and Mathematics ``BASIS''. 
The results of Sec. 3.2 were obtained under the exclusive support of the Tomsk State University Development Program (Priority–2030).

\appendix

\section{Pre-Lagrangian form of free massive theories}\label{AppA}

We use the notation (\ref{gdiv}) for repeated gradients and divergences of the tensor $\psi_{\mu_2\cdots \mu_s}$ of rank $s-1$.  The symmetrization of covariant indices is defined by the formula 
\begin{equation}
    \partial_{(\mu_1\cdots\mu_r}(\partial^{r-1}\psi)_{\mu_{r+1}\ldots\mu_s)} := \frac{1}{r!(s-r)!}\sum\limits_{\sigma \in S_n}\partial_{\sigma(\mu_1)\cdots\sigma(\mu_r)}(\partial^{r-1}\psi)_{\sigma(\mu_{r+1})\ldots\sigma(\mu_s)}
\end{equation}
and the projection on the traceless part is denoted by braces: 
\begin{equation}\label{br}
    \partial_{\{\mu_1\cdots\mu_r}(\partial^{r-1}\psi)_{\mu_{r+1}\ldots\mu_s\}} := \partial_{(\mu_1\cdots\mu_r}(\partial^{r-1}\psi)_{\mu_{r+1}\ldots\mu_s)} - \frac{1}{2s + d - 4}\eta_{(\mu_1\mu_2}\Theta _{\mu_3\ldots\mu_s)}\,.
\end{equation}
Here the tensor $\Theta_{\mu_3\ldots\mu_s}$ is defined by induction on the number of gradients as
\begin{equation}
    \Theta_{\mu_3\ldots\mu_s} = \square\partial_{\{\mu_3\cdots\mu_r}(\partial^{r-1}\psi)_{\mu_{r+1}\ldots\mu_s\}} + 2\partial_{\{\mu_3\cdots\mu_{r+1}}(\partial^r\psi)_{\mu_{r+2}\ldots\mu_s\}}\,.
\end{equation}
The base of the induction is
\begin{equation*}
    (\partial^{r-1}\psi)_{\{\mu_{r+1}\ldots\mu_s\}} = (\partial^{r-1}\psi)_{\mu_{r+1}\ldots\mu_s} \,.
\end{equation*}
The most general Lorentz invariant Ansatz  for the extended system of equations reads 
\begin{equation}
    \begin{array}{l}
        (\square - m^2)\phi_{\mu_1\ldots\mu_s} + A_1\partial_{\{\mu_1}\psi_{\mu_2\ldots\mu_s\}}  + \ldots + A_s\partial_{\{\mu_1\mu_2\cdots\mu_s\}}(\partial^{s-1}\psi) = 0\,,\\[3mm]
        (\partial\phi)_{\mu_2\ldots\mu_s} + B_1\psi_{\mu_2\ldots\mu_s} + B_2\partial_{\{\mu_2}(\partial\psi)_{\mu_3\ldots\mu_s\}} + \ldots + B_s\partial_{\{\mu_2\ldots\mu_s\}}(\partial^{s-1}\psi) = 0\,.
    \end{array}
\end{equation}
To study the differential consequences of these equations we use the following formulas for divergences: 
\begin{equation}\label{DIV}
\begin{array}{rcl}
    \partial^{\mu_1}\partial_{\{\mu_1}\psi_{\mu_2\ldots\mu_s\}} &=&  \displaystyle \square\psi_{\mu_2\ldots\mu_s} + \frac{2s + d - 6}{2s + d - 4}\partial_{\  \mu_2}(\partial\psi)_{\mu_3\ldots\mu_s\}}\,,\\[5mm]
\partial^{\mu_1}\partial_{\{\mu_1\cdots\mu_r}(\partial^{r-1}\psi)_{\mu_{r+1}\ldots\mu_s\}} &= &  \displaystyle \frac{2s + d - 5}{2s + d - 4}\square\partial_{\{\mu_2\cdots\mu_r}(\partial^{r-1}\psi)_{\mu_{r+1}\ldots\mu_s\}}\,,\\[5mm]
    &+&  \displaystyle \frac{2s + d - 6}{2s + d - 4}\partial_{\{\mu_2\cdots\mu_{r+1}}(\partial^r\psi)_{\mu_{r+2}\ldots\mu_s\}}\,,\quad 1<r<s-1\,,\\[5mm]
    \partial^{\mu_1}\partial_{\{\mu_1\cdots\mu_{s-1}}(\partial^{s-2}\psi)_{\mu_{s\}}} &=&  \displaystyle \frac{2s + d - 5}{2s + d - 4}\square\partial_{\{\mu_2\cdots\mu_s-1}(\partial^{s-2}\psi)_{\mu_s\}} 
    \\[5mm]&+ &  \displaystyle \frac{d - 2}{2s + d - 4}\partial_{\{\mu_2\cdots\mu_s\}}(\partial^{s-1}\psi)\,,\\[5mm]
    \partial^{\mu_1}\partial_{\{\mu_1\cdots\mu_s\}}(\partial^{s-1}\psi) &=&  \displaystyle \frac{s + d - 3}{2s + d - 4}\square\partial_{\{\mu_2}\ldots\partial_{\mu_s\}}(\partial^{s-1}\psi)\,.
    \end{array}
\end{equation}
With these relations, Eq. (\ref{dc}) takes the form 
\begin{equation}
\begin{array}{c}
    \big(m^2B_1 + (A_1 - B_1)\square\big)\psi_{\mu_2\ldots\mu_s} \\[5mm]
  \displaystyle + \sum\limits_{r = 2}^{s-1}\Big(\frac{2s + d - 5}{2s + d - 4}A_r - B_r\Big)\square\partial_{\{\mu_2\cdots\mu_r}(\partial^{r-1}\psi)_{\mu_{r+1}\ldots\mu_s\}} \\[5mm]
  \displaystyle +\sum\limits_{r=1}^{s-2}\Big(\frac{2s + d - 6}{2s + d - 4}A_r + m^2B_{r+1}\Big)\partial_{\{\mu_2\cdots\mu_{r+1}}(\partial^r\psi)_{\mu_{r+2}\ldots\mu_s\}}\\[5mm]
  \displaystyle + \Big(\frac{d - 2}{2s + d - 4}A_{s-1} + m^2B_s\Big)\partial_{\{\mu_2\cdots\mu_s\}}(\partial^{s-1}\psi)\\[5mm]
  \displaystyle + \Big(\frac{s + d - 3}{2s + d - 4}A_s - B_s\Big)\square\partial_{\{\mu_2\cdots\mu_s\}}(\partial^{s-1}\psi) = 0\,.
    \end{array}
\end{equation}
This leads to the following equations on the coefficients:
\begin{equation}
    \begin{array}{ll}
m^2B_1 + (A_1 - B_1)\square=1\,,&\\[3mm]
  \displaystyle \frac{2s + d - 5}{2s + d - 4}A_r - B_r = 0\,,&1 < r < s\,,\\[3mm]
  \displaystyle \frac{2s + d - 6}{2s + d - 4}A_r + m^2B_{r+1} = 0\,, & r < s-1\,,\\[3mm]
  \displaystyle \frac{d - 2}{2s + d - 4}A_{s-1} + m^2B_s = 0\,,&
  \displaystyle \frac{s + d - 3}{2s + d - 4}A_s - B_s = 0\,,
\end{array}
\end{equation}
whence 
\begin{equation*}
    \frac{2s + d - 5}{2s + d - 4}A_{r+1} = B_{r+1} = -\frac{2s + d - 6}{m^2(2s + d - 4)}A_r\,, \quad 1<r<s\,.
\end{equation*}
The last formula gives the recurrence relations for determining the coefficients. Setting, for example,  $A_1 = {1}/{m^2}$, we arrive at the solution (\ref{AB}).

\section{Derivation of the Lagrangians}\label{AppB}

Our starting point is the Lagrangian  (\ref{L}) together with the field equations (\ref{ELE}). By making use of formulas (\ref{DIV}) we compute the successive divergences of equations (\ref{ELE}). The differential consequences with $s-n$ indices have the following structure:
\begin{equation}
    \label{gen}
    \begin{array}{l}
        A^{(n)}(\partial^n\phi)_{\mu_1\ldots\mu_{s-n}} + \frac{1}{2}C^{(n)}(\partial^{n-1}\psi)_{\mu_1\ldots\mu_{s-n}} + \ldots = 0\,,\\[3mm]
        B^{(n)}(\partial^{n-1}\psi)_{\mu_1\ldots\mu_{s-n}} - \frac{1}{2}D^{(n)}(\partial^n\phi)_{\mu_1\ldots\mu_{s-n}} + \ldots = 0\,.
    \end{array}
\end{equation}
Here
\begin{equation}
\begin{array}{ll}
      \displaystyle  A^{(n)}= \sum_{k=0}^n f_k(s,n)A_k\square^k\,,&\qquad  \displaystyle 
      B^{(n)} =\sum_{k=0}^{n-1} f_k(s, n-1)B_k\square^k\,,\\[5mm]
  \displaystyle C^{(n)} = \sum_{k=1}^n f_k(s,n) C_k\square^k\,,&\qquad
  \displaystyle D^{(n)} = \sum_{k=0}^{n-1}f_k(s, n-1)C_{k+1}\square^{k}\,,
    \end{array}
\end{equation}
and the coefficients $f_k(s,n)$ are defined by the relations 
\begin{equation*}
    f_0(s,n) = 1\,,\qquad f_1(s,n) = \frac{(2s+d-3-n)n}{2s+d-4}\,,\qquad f_s(s,s) = \prod_{i = 0}^{s-2} \frac{s-i + d -3}{2(s-i) + d - 4}\,,
\end{equation*}
\begin{equation*}
    f_k(s,k) = \prod_{i = 0}^{k-2} \frac{2(s-i)+d-5}{2(s-i)+d-4}\qquad 1 < k < s\,,
\end{equation*}
\begin{equation*}
    f_k(s,n) = \frac{2s+d-6}{2s+d-4}f_k(s-1,n-1) + \frac{2s+d-5}{2s+d-4}f_{k-1}(s-1,n-1) \qquad 1 < k < n \leq s,\quad s-k>1\,,
\end{equation*}
\begin{equation*}
    f_k(s,n) = \frac{d-3}{2s+d-4}f_k(s-1,n-1) + \frac{2s+d-5}{2s+d-4}f_{k-1}(s-1,n-1) \qquad s = n = k + 1\,.
\end{equation*}
The dots in (\ref{gen}) stand for the terms involving $\partial^{m}\phi$ and $\partial^{l}\psi$ with $m>n$ and $l>n-1$. These last terms are supposed to vanish 
due to equations with less number of free indices. Under this assumption, (\ref{gen}) becomes a homogeneous linear system of equations on the unknowns $\partial^n\phi$ and $\partial^{n-1}\psi$.  In order for this system to have only the trivial solution, the operator determinant of the system
\begin{equation}
    \mathrm{Det}_n=A^{(n)}B^{(n)} + \frac{1}{4}C^{(n)}D^{(n)}
\end{equation}
must be a nonzero constant. Evaluating the determinants for all $n=1,\ldots, s$ gives a system of quadratic equations on the coefficients $A_k$, $B_k$, and $C_k$. Unfortunately, we were unable to solve these equations explicitly for an arbitrary value of spin. For some lower spins, the equations run as follows. In all cases $A_0=t-m^2$ and
\begin{equation}
\mathrm{Det}_1=\left| 
\begin{array}{cc}
(A_1+1)t -m^2& \frac12C_1 \\
   -\frac12C_1  & B_0
\end{array}\right|\,,
\end{equation}
where $t=\square$. The other determinants depend on the value of spin and dimension. 

\paragraph{Spin 2:}
\begin{equation}
    \mathrm{Det}_2 =
    \begin{vmatrix}
        \frac{d-1}{d}(A_2t^2 + 2A_1t) + t - m^2& \frac{d-1}{2d}(C_2t^2 + 2C_1t)\\[3mm]
        -\frac{1}{2}(C_1 + C_2t)& B_0 + B_1t
    \end{vmatrix}\,.
\end{equation}
\paragraph{Spin 3:}
\begin{equation}
\begin{array}{l}
%\mathrm{Det}_1=\left| 
%\begin{array}{cc}
%(A_1+1)t -m^2& \frac12C_1 \\[3mm]
%   -\frac12C_1  & B_0
%\end{array}
%\right|\,, \\[7mm]
    \mathrm{Det}_2 =
    \begin{vmatrix}
        \frac{d+1}{d+2}(A_2t^2 + 2A_1t) -m^2 + t& \frac{d+1}{2(d+2)}(C_2t^2 + 2C_1t)\\[3mm]
- \frac{1}{2}(C_1 + C_2t)& B_0 + B_1t
    \end{vmatrix}\,,\\[7mm]
    \mathrm{Det}_3 = 
    \begin{vmatrix}
        \frac{(d-1)A_3t^3 + 3(d-1)A_2t^2 + 3dA_1t}{d+2} + t -m^2& \frac{(d-1)C_3t^3 + 3(d-1)C_2t^2 + 3dC_1t}{2(d+2)}\\[3mm]
-\frac{d-1}{2d}(C_3t^2 + 2C_2t) - \frac{1}{2}C_1&\frac{d-1}{d}(B_2t^2 + 2B_1t)+B_0
    \end{vmatrix}\,.
\end{array}
\end{equation}

\paragraph{Spin 4:}
\begin{equation}
\begin{array}{l}
    \mathrm{Det}_2 = 
    \begin{vmatrix}
        \frac{d+3}{d+4}(A_2t^2 + 2A_1t) -m^2 + t& \frac{d+3}{2(d+4)}(C_2t^2 + 2C_1t)\\[3mm]
 - \frac{1}{2}(C_1 + C_2t)& B_0 + B_1t
    \end{vmatrix}\\[7mm]
    \mathrm{Det}_3 = 
    \begin{vmatrix}
        \frac{\alpha_3A_3t^3 + \alpha_2A_2t^2 + \alpha_1A_1t}{(d+2)(d+4)} + t - m^2& \frac{\alpha_3C_3t^3 + \alpha_2C_2t^2 +\alpha_1C_1t}{2(d+2)(d+4)}\\[3mm]
 - \frac{1}{2}(\frac{d+1}{d+2}(C_3t^2 + 2C_2t) + C_1)&\frac{d+1}{d+2}(B_2t^2 + 2B_1t)+B_0
    \end{vmatrix}\,,\\[7mm]
    \mathrm{Det}_4 = 
    \begin{vmatrix}
        \frac{\beta_4A_4t^4 + \beta_3A_3t^3 + \beta_2A_2t^2 + \beta_1A_1t}{(d+2)(d+4)} + t - m^2& \frac{\beta_4C_4t^4 + \beta_3C_3t^3 + \beta_2C_2t^2 + \beta_1C_1t}{(d+2)(d+4)}\\[3mm]
        \frac{(d-1)B_3t^3 + 3(d-1)B_2t^2 + 3dB_1t}{d+2}+B_0& - \frac{1}{2}(\frac{(d-1)C_4t^3 + 3(d-1)C_3t^2 + 3dC_2t}{d+2} + C_1)
    \end{vmatrix}\,,
    \end{array}
\end{equation}
where
\begin{equation}
\begin{array}{l}
    \alpha_3 = d^2+4d+3\,,\quad \alpha_2 = 3d^2 + 11d + 8\,,\quad \alpha_1 = 3d^2 + 12d + 12\,,\\[3mm]
 \beta_4 = d^2 - 1\,,\quad \beta_3 = 4d^2 + 3d - 7\,,\quad \beta_2 = 6d^2 + 12d - 6\,,\quad \beta_1 = 4d^2 + 12d + 8\,.
    \end{array}
\end{equation}
In the equations above, all $\mathrm{Det}$'s are required to be nonzero constants. One can look for the coefficients $A_k$, $B_k$, and  $C_k$ that are real numbers. This leads to overdetermined systems of quadratic equations. Applying {\sc Wolfram Mathematica} yields the explicit solutions (\ref{s2}--\ref{s4}).

\end{document}